\documentclass[runningheads]{llncs}
\newif\ifanonymous
\anonymousfalse     
\usepackage[T1]{fontenc}
\usepackage{graphicx}
\usepackage{bbold}
\usepackage{stmaryrd}
\usepackage{amssymb}
\usepackage{amsmath}
\usepackage{csquotes}
\usepackage{xspace}
\usepackage{xcolor}
\usepackage{enumitem}
\usepackage[strings]{underscore}
\usepackage{graphicx}
\usepackage{subcaption}
\usepackage{orcidlink}
\usepackage[misc,geometry]{ifsym}
\usepackage[nameinlink,noabbrev]{cleveref}

\hypersetup{hidelinks}

\usepackage[ruled,noend,linesnumbered]{algorithm2e}
\usepackage[noend]{algpseudocode}
\SetNlSty{textsf}{}{}
\SetAlgoNlRelativeSize{0}
\SetKw{Continue}{continue}
\definecolor{algcommentgray}{gray}{0.40} 

\SetCommentSty{mycommentfont}

\newcommand{\printlen}[1]{\typeout{LEN(#1)=\the#1}}
\newcommand{\mailto}{\textsuperscript{(\Letter)}}
\usepackage[]{todonotes}
\usepackage{xparse}

\definecolor{mygreen}{RGB}{23,140,105}
\definecolor{myorange}{RGB}{200,85,0}
\definecolor{mypurple}{RGB}{125,90,150}
\definecolor{myblue}{RGB}{25,105,165}

\usetikzlibrary{automata}
\usetikzlibrary{positioning}
\usetikzlibrary{arrows}
\usetikzlibrary{arrows.meta}
\usetikzlibrary{shapes.misc, positioning}
\usetikzlibrary{shapes}
\usetikzlibrary{decorations}
\usetikzlibrary{patterns}
\usetikzlibrary{matrix}

\tikzset{node distance=2.5cm,
         every state/.style={
         semithick,fill=gray!10},
         initial text={},
         double distance=2pt,
         every edge/.style={
            draw,
            ->,>=stealth',
            auto,
            semithick}}
\tikzset{every state/.append style={inner sep=0.0pt, minimum size=20pt}}
\tikzset{elliptic state/.style={draw,ellipse,semithick,fill=gray!10}}
\tikzset{ematrix/.style={double, double distance=4pt, >={Implies[length=5pt,width=8pt]}}}

\newcommand{\lang}{\ensuremath{\mathcal{L}}}

\newcommand{\syms}{\ensuremath{\Sigma}}
\newcommand{\syma}{\ensuremath{a}}
\newcommand{\symb}{\ensuremath{b}}

\newcommand{\words}{\ensuremath{\syms^\star}}
\newcommand{\eword}{\ensuremath{\varepsilon}}
\newcommand{\worda}{\ensuremath{w}}

\newcommand{\gens}{\ensuremath{\mathcal{F}}}
\newcommand{\gena}{\ensuremath{f}}

\NewDocumentCommand{\prefixes}{ o }{%
  \ensuremath{\mathit{prefixes}%
  \IfValueT{#1}{\!\left(#1\right)}}%
}
\NewDocumentCommand{\suffixes}{ o }{%
  \ensuremath{\mathsf{suffixes}%
  \IfValueT{#1}{\!\left(#1\right)}}%
}

\newcommand{\langdim}[1]{\ensuremath{| #1 |}}
\newcommand{\witdim}[1]{\ensuremath{| #1 |}}

\newcommand{\true}{\ensuremath{\mathit{true}}}
\newcommand{\false}{\ensuremath{\mathit{false}}}

\newcommand{\aut}{\ensuremath{\mathcal{A}}}
\newcommand{\hyp}{\ensuremath{\mathcal{H}}}
\newcommand{\states}{\ensuremath{Q}}
\newcommand{\state}{\ensuremath{q}}
\newcommand{\init}{\ensuremath{\iota}}
\newcommand{\trans}{\ensuremath{\delta}}
\newcommand{\fin}{\ensuremath{\lambda}}
\newcommand{\sem}[1]{\ensuremath{\llbracket#1\rrbracket}}

\newcommand{\semi}{\ensuremath{\mathbb{S}}}
\newcommand{\semiadd}{\ensuremath{\oplus}}

\newcommand{\semimul}{\ensuremath{\odot}}
\newcommand{\serimul}{\ensuremath{\odot}}
\newcommand{\semizero}{\ensuremath{\mathbb{0}}}
\newcommand{\semione}{\ensuremath{\mathbb{1}}}
\newcommand{\semia}{\ensuremath{s}}
\newcommand{\semib}{\ensuremath{s'}}
\newcommand{\seminat}{\ensuremath{\mathbb{N}}}

\newcommand{\semimod}[2]{\ensuremath{\mathcal{W}_{#1}\left(#2 \right)}}

\NewDocumentCommand{\serispan}{ o }{%
  \ensuremath{\mathit{span}%
  \IfValueT{#1}{\!\left(#1\right)}}%
}
\NewDocumentCommand{\fps}{ m m }{%
  #2\langle\!\langle #1 \rangle\!\rangle%
}

\NewDocumentCommand{\seris}{ o o }{%
  \IfNoValueTF{#1}{%
    \fps{\words}{\semi}%
  }{%
      \fps{#1}{#2}%
  }%
}
\NewDocumentCommand{\serisrat}{ o o }{%
  \seris[#1][#2]_{\mathrm{rec}}%
}

\NewDocumentCommand{\polybase}{ m m }{%
  #2\langle#1\rangle%
}

\NewDocumentCommand{\poly}{ o o }{%
  \IfNoValueTF{#1}{%
    \polybase{\words}{\semi}%
  }{%
      \polybase{#1}{#2}%
  }%
}

\NewDocumentCommand{\mem}{ o }{%
  \ensuremath{%
    \mathbf{out}%
    \IfValueT{#1}{\!\left(#1\right)}%
  }%
}
\NewDocumentCommand{\eq}{ o }{%
  \ensuremath{%
    \mathbf{eq}%
    \IfValueT{#1}{\!\left(#1\right)}%
  }%
}
\NewDocumentCommand{\solve}{ o }{%
  \ensuremath{%
    \mathbf{solve}%
    \IfValueT{#1}{\!\left(#1\right)}%
  }%
}

\NewDocumentCommand{\fencode}{ o }{%
  \ensuremath{%
    \mathsf{encode}%
    \IfValueT{#1}{\!\left(#1\right)}%
  }%
}

\NewDocumentCommand{\flearn}{ o }{%
	\ensuremath{%
		\mathsf{learn}%
		\IfValueT{#1}{\!\left(#1\right)}%
	}%
}

\NewDocumentCommand{\fgetwit}{ o }{%
	\ensuremath{%
		\mathsf{getWitness}%
		\IfValueT{#1}{\!\left(#1\right)}%
	}%
}

\NewDocumentCommand{\fgetaut}{ o }{%
	\ensuremath{%
		\mathsf{getAutomaton}%
		\IfValueT{#1}{\!\left(#1\right)}%
	}%
}

\NewDocumentCommand{\fdecode}{ o }{%
  \ensuremath{%
    \mathsf{decode}%
    \IfValueT{#1}{\!\left(#1\right)}%
  }%
}

\newcommand{\vcex}{\ensuremath{\mathit{res}}}

\newcommand{\vobs}{\ensuremath{O}}
\newcommand{\wit}{\ensuremath{\mathcal{W}}}
\newcommand{\unit}{\ensuremath{\mathbf{1}}}

\newcommand{\sat}[1]{\ensuremath{\mathsf{SAT}\left( #1 \right)}}
\newcommand{\model}{\ensuremath{\mathsf{model}}}

\newcommand{\teacher}{\ensuremath{\mathcal{T}}}

\newcommand{\eqsys}{\ensuremath{\Phi}}

\newcommand{\semibool}{\ensuremath{\mathbb{B}}}
\newcommand{\semitrop}{\ensuremath{\mathbb{T}}}
\newcommand{\semibtrop}[1]{\ensuremath{\mathbb{T}_{#1}}}
\newcommand{\semibot}{\ensuremath{\mathbb{Bot}}}
\newcommand{\badd}[1]{\ensuremath{+_{#1}}}

\newcommand{\theorynra}{\ensuremath{\textnormal{QF-NRA}}\xspace}
\newcommand{\theorylia}{\textnormal{\textnormal{QF-LIA}}\xspace}
\newcommand{\theorybv}{\ensuremath{\textnormal{QF-BV}}\xspace}

\usepackage{xspace}
\newcommand{\algopid}{\ensuremath{\mathsf{WL}^\star}\xspace}
\newcommand{\algonaive}{\ensuremath{\mathsf{Naive}}\xspace}
\newcommand{\algonew}{\ensuremath{\mathsf{SWAL}}\xspace}
\newcommand{\algolstar}{\ensuremath{\mathsf{L}^\star}\xspace}
\newcommand{\lstar}{\algolstar}
\newcommand{\algonlstar}{\ensuremath{\mathsf{NL}^\star}\xspace}

\begin{document}
\printlen{\textwidth}
\printlen{\linewidth}
\printlen{\columnwidth}
\title{\mbox{SMT-Based Active Learning of Weighted Automata}}
\ifanonymous
  \author{Anonymous}
  \authorrunning{Anonymous}
 \institute{}
\else
\author{Tiago Ferreira\inst{1}\mailto\orcidlink{0000-0002-6942-0228} \and
        Kevin Batz\inst{2}\orcidlink{0000-0001-8705-2564} \and
        Alexandra Silva\inst{2}\orcidlink{0000-0001-5014-9784}}
\authorrunning{Ferreira, Batz, and Silva}
\institute{University College London, London, UK\\
  \email{t.ferreira@ucl.ac.uk} \and
  Cornell University, Ithaca, NY, USA\\
  \email{\{ksb239,alexandra.silva\}@cornell.edu}}
\fi
\maketitle
\begin{abstract}
We present an SMT-based active learning algorithm for nondeterministic weighted automata (WFAs) as a practical and robust alternative to Hankel/\lstar-style methods. Our algorithm is parametric in a given semiring and, if it terminates, guaranteed to produce \emph{minimal} WFAs. We prove partial correctness and provide a sufficient termination condition, which in particular implies termination for all finite semirings. Our extensive experimental evaluation shows that our algorithm is capable of learning numerous minimal WFAs over both finite and infinite semirings, vastly outperforms a naive baseline, and is competitive with a state-of-the-art algorithm while producing significantly smaller automata and requiring less interaction with the teacher.
\keywords{Automata Learning \and Weighted Automata \and SMT Solving}
\end{abstract}

\section{Introduction}

Active automata learning, pioneered by Angluin's \lstar algorithm~\cite{angluin_learning_1987}, has become widely popular in formal verification~\cite{vaandrager_model_2017}. \lstar{}-style algorithms are routinely used to infer models of unknown components from queries, spanning applications in networking~\cite{fiterau-brostean_combining_2016,ferreira_prognosis_2021}, software verification~\cite{finkbeiner_learning_2015}, and real-world systems such as passports~\cite{aarts_inference_2010,marksteiner_automated_2024} or bank cards~\cite{aarts_formal_2013}. A key appeal of these algorithms is their tight integration with verification backends: counterexamples produced by model checkers naturally serve as equivalence-query answers~\cite{wu_black_1999}, while the learned automata act as compact, explainable abstractions of system behavior. Most of the work currently used in verification is based on Boolean automata: from the simpler deterministic finite-state model Angluin pioneered~\cite{angluin_learning_1987} to Mealy automata (input-output)~\cite{heerdt_efficient_2014} to automata with finite memory (with restricted access)~\cite{finkbeiner_scalable_2024}.

Weighted finite automata (WFAs) provide a uniform and expressive formalism for modeling and verifying quantitative system properties such as cost, probability, and resource consumption, making them a central abstraction in quantitative verification. They shift from Boolean to quantitative transitions, allowing these to carry weights from a semiring. This allows WFAs to model important quantities, and then serve as a basis to answer a range of important questions in quantitative verification and performance analysis: \emph{What is the minimal cost of reaching an error state?}, \emph{Is the probability of failure below a threshold?}, \emph{Does every execution respect a given resource bound?}. The question whether \lstar{}-style algorithms can be extended to WFA was a natural one to tackle and the last decade saw a range of theoretical results, mostly showing termination conditions for different classes of semirings: starting from fields~\cite{bergadano_learning_1996} to principal ideal domains~\cite{heerdt_learning_2020} or number rings~\cite{aristote_learning_2025}. Maybe not surprisingly, learning WFAs turned out to be substantially more challenging than learning deterministic finite automata. Classical Hankel- and table-based generalizations of \lstar must reason about algebraic structure in the weight domain (the semiring), and proving termination turned out to be very subtle over many natural semirings, often relying on indirect linear-algebraic arguments. In practice, these challenges limit both the applicability and the scalability of existing WFA learning algorithms.

This paper takes a different, algorithmically-driven approach to learning weighted automata. Rather than refining a Hankel matrix or observation table, we cast learning directly as a constraint-solving problem. Building on a coalgebraic view of WFAs, we derive a system of equations over the given semiring whose solutions correspond exactly to nondeterministic WFAs that agree with the teacher on all observed queries. Learning then amounts to incrementally strengthening this equation system in response to counterexamples and invoking an SMT solver to search for a solution. This perspective yields an algorithm that is parametric in the semiring, naturally supports nondeterminism, and --- when it terminates --- produces minimal weighted automata by construction.

A central advantage of this formulation is its compatibility with modern SMT technology. We show that the resulting equation systems fall into decidable arithmetic theories for several practically relevant semirings, including all finite semirings and infinite ones such as the tropical semiring. We further identify sufficient conditions guaranteeing termination, which in particular ensure termination for all finite semirings. An extensive experimental evaluation demonstrates that our SMT-based learner can infer numerous minimal WFAs over both finite and infinite semirings, outperforming a naive baseline and competing favorably with an existing algorithm while producing significantly smaller automata and requiring fewer interactions with the teacher. Taken together, our results position constraint-based learning as a practical and robust alternative to Hankel/\lstar-style methods for weighted automata in verification-oriented settings.

\paragraph*{\textbf{Contributions.}}
In a nutshell, our paper makes the following contributions:
\begin{enumerate}
\item We present a novel SMT-based active WFA learning algorithm (\Cref{sec:learning}), which, if it terminates, is guaranteed to provide \emph{minimal} WFAs. This guarantee of minimality on termination, which is semiring-agnostic, is a crucial advantage of our approach over \lstar{}-based methods where minimality guarantees only exist for very specific semirings (as we detail in \Cref{sec:related}).
	\item We prove partial correctness of our algorithm and provide sufficient conditions for termination, which in particular apply to all finite semirings. As a corollary, we obtain provable guarantees for learning minimal NFAs.	%
	\item At the heart of our algorithm is  the first constraint-based WFA generation algorithm enabling SMT-based automata synthesis (\Cref{sec:getaut}), and we identify semirings for which the  constraint \mbox{systems are decidable (\Cref{sec:decidable_semirings}).}
	\item We provide an extensive experimental evaluation in \Cref{sec:eval}, demonstrating that our algorithm is capable of learning numerous WFAs over both finite and infinite semirings. By comparing to a naive baseline as well as an existing WFA learning algorithm (which works for a restricted class of semirings), we show how our algorithm advances the state-of-the-art in active learning for weighted automata over a variety of semirings.
\end{enumerate}
We discuss related work in \Cref{sec:related} and conclude in \Cref{sec:conclusion}.

\section{Semirings and Weighted Automata}
\label{sec:prelim}
In this section, we briefly recap preliminary material on weighted automata and weighted languages, which are parametric on a semiring.
\begin{definition}[Semiring] A \emph{semiring} is a tuple
	\( \langle \semi, \semiadd, \semimul, \semizero, \semione\rangle \),
	where
$\semi$ is a {\em carrier} set,  $\langle \semi, \semiadd, \semizero\rangle$ is a commutative monoid,
 $\langle \semi,\semimul,\semione\rangle$ is a monoid,
  multiplication distributes over addition, i.e., for all $\semia_1,\semia_2,\semia_3\in\semi$,
		\[
			\semia_1 \semimul (\semia_2 \semiadd \semia_3) ~{}={}~ (\semia_1 \semimul \semia_2) \semiadd (\semia_1\semimul\semia_3)
			\qquad\text{and}\qquad
			(\semia_2 \semiadd \semia_3) \semimul \semia_1 ~{}={}~ (\semia_2 \semimul \semia_1) \semiadd (\semia_3\semimul\semia_1),
		\]
and $\semizero$ is annihilating, i.e., for all $\semia \in \semi$, we have $\semia\semimul \semizero = \semizero = \semizero \semimul \semia$.
\end{definition}
In this paper, we primarily focus on the following semirings~\cite[Sec. 2.1]{droste_handbook_2009}:
\begin{enumerate}[leftmargin=*]
	\item The \emph{Boolean semiring} $\semibool \triangleq \langle \{0,1\}, \vee, \wedge, 0, 1\rangle$.
	\item The \emph{Tropical semiring} $\semitrop \triangleq \langle \mathbb{N} \cup \{\infty\}, \min, +, \infty, 0\rangle$.
	\item The family of \emph{bounded Tropical semirings}, defined for a bound $b \in \seminat$ as
	 \(
	\semibtrop{b} \triangleq \langle \{0,\ldots,b \} \cup \{\infty\}, \min, \badd{b}, \infty, 0\rangle~,
	\),
	where $\semia \badd{b} \semib \triangleq \infty$ whenever $\semia + \semib > b$.
	\item The \emph{Bottleneck semiring} $\semibot \triangleq \langle \mathbb{N} \cup \{-\infty, \infty\}, \max, \min, -\infty, \infty  \rangle$.
\end{enumerate}
Below, to define weighted automata, we will use $
\semi$-valued matrices which we denote by $\semimod{\semi}{R\times C}$ where $R$ and $C$ are the sets indexing the rows and columns. When $R$ (resp. $C$) is a singleton set (denoted by $\unit$) the matrix is simply an $\semi$-valued row (resp. column) vector. Formally, for a set $X$, $$\semimod{\semi}{X} \triangleq \{ v \mid v\colon X \to \semi\}.$$ We will use $\times$ for standard matrix multiplication of these objects. For the rest of the paper, we fix a semiring $\langle \semi, \semiadd, \semimul, \semizero, \semione\rangle$, and a finite alphabet $\syms$.

\begin{definition}[WFA]
A \emph{weighted finite automaton~\cite{droste_handbook_2009} over $\semi$ ($\semi$-WFA)} is a tuple $\aut = \langle \states, \init, \trans, \fin \rangle$ where $\states$ is a finite nonempty set of states, $\init \in \semimod{\semi}{\unit \times \states}$ is the initial weight vector, $\trans$ is a $\syms$-indexed family of transition matrices $\trans^{\syma} \in \semimod{\semi}{\states\times\states}$, and $\lambda \in \semimod{\semi}{\states\times \unit}$ is the final weight vector.
\end{definition}
When $\semi$ is clear from context, we simply write WFA instead of $\semi$-WFA. WFAs recognize {\em weighted languages}, functions $\lang \colon \words \to \semi$, where $\words$ denotes the set of finite sequences (words) over $\syms$. Note how for the Boolean semiring the definition of WFA coincides with that of nondeterministic finite automata (NFAs), and {\em weighted languages} over $\semibool$ are languages in the usual sense (sets of words $2^{\words}$).

In the sequel, we denote by $\varepsilon \in \words$ the empty word and define the concatenation of words $\worda$ and $\worda'$ by $\worda \cdot \worda'$, often abbreviated by their juxtaposition $\worda\worda'$. We say that $\worda' \in \words$ is a suffix of $\worda$, if there exists $\worda''$ such that $\worda''\worda' = \worda$. We denote the set of suffixes of $\worda$ by $\suffixes[\worda]$. The set of all weighted languages is denoted by $\seris$.

\begin{definition}[Semantics]
The \emph{semantics} of a WFA $\aut$ is the weighted language $\sem{\aut} \in \seris$ given by:
\[
	\sem{\aut}(\worda) \triangleq
	\begin{cases}
		\init \times \fin &\text{if $\worda = \eword$}\\
	 \init \times \trans^{\syma_1} \times \ldots \times  \trans^{\syma_n} \times \fin & \text{if $\worda = \syma_1\ldots\syma_n$}
	 \end{cases}
\]
We call $\sem{\aut}(\worda)$ the \emph{weight $\aut$ assigns to $\worda$}.
\end{definition}
 A language $\lang$ is \emph{recognizable}, denoted $\lang \in \serisrat$, if there is a WFA $\aut$ recognizing $\lang$, i.e., $\sem{\aut} = \lang$.

A WFA $\aut$ is \emph{minimal}, if there does not exist a WFA $\aut'$ with fewer states and $\sem{\aut} = \sem{\aut'}$.
We denote by $\langdim{\lang}$ the \emph{dimension of $\lang \in \serisrat$}, which is the number of states of a minimal automaton recognizing $\lang$. Note that minimal WFAs are not unique in the same sense as deterministic finite automata --- this is because there might be different transition structures over the same number of states recognizing the same language. However, the number of states of the minimal automaton is a uniquely determined quantity.
\begin{example}
\label{ex:aut1}
	Consider the following $2$-state WFA $\aut \triangleq \langle \{q_1,q_2\}, \init, \trans,\fin \rangle$ over the tropical semiring $\semitrop$ and the alphabet $\{\syma, \symb\}$:

\begin{minipage}[t]{0.45\textwidth}
\vspace{-10pt}
        \centering
	\[
	\begin{aligned}
\init    &\triangleq \begin{pmatrix} \textbf{\textcolor{myorange}{4}} & \textbf{\textcolor{myorange}{5}} \end{pmatrix}
	  &\quad
	  \trans^a &\triangleq \begin{pmatrix} \textbf{\textcolor{myblue}{6} }& \infty \\ \infty &  \textbf{\textcolor{myblue}{4}} \end{pmatrix}
	  \\
\trans^b &\triangleq \begin{pmatrix} \textbf{\textcolor{mygreen}{6}} & \textbf{\textcolor{mygreen}{8}} \\ \infty & \textbf{\textcolor{mygreen}{12}} \end{pmatrix}
	  &\quad
\fin     &\triangleq \begin{pmatrix} \infty \\ \textbf{\textcolor{mypurple}{3}} \end{pmatrix}
	\end{aligned}
	\]
    \end{minipage}%
    \hfill
\begin{minipage}[t]{0.45\textwidth}
\vspace{0pt}
        \centering
	\begin{tikzpicture}[bend angle=12]

	  \node[state] (q1) {$q_1$};
	  \node[state, right of=q1, node distance=60pt] (q2) {$q_2$};

\node[draw=none, above of=q1, node distance=28pt] (i1) {$\textbf{\textcolor{myorange}{4}}$};
	  \draw[->] (i1) -- (q1);
\node[draw=none, above of=q2, node distance=28pt] (i2) {$\textbf{\textcolor{myorange}{5}}$};
	  \draw[->] (i2) -- (q2);

\node[draw=none, below of=q2, node distance=28pt] (o2) {$\textbf{\textcolor{mypurple}{3}}$};
	  \draw[->] (q2) -- (o2);

	  \draw (q1) edge[loop left]
	    node {$\begin{tabular}{c}
$a/ \textbf{\textcolor{myblue}{6}}$\\
$b/\textbf{\textcolor{mygreen}{6}}$
	    \end{tabular}$}
	    (q1);

	  \draw (q2) edge[loop right]
	    node {$\begin{tabular}{c}
$a/ \textbf{\textcolor{myblue}{4}}$\\
$b/\textbf{\textcolor{mygreen}{12}}$
	    \end{tabular}$}
	    (q2);

	  \draw (q1) edge[above, bend left]
	    node {$\begin{tabular}{c}
$b/\textbf{\textcolor{mygreen}{8}}$
	    \end{tabular}$}
	    (q2);

	\end{tikzpicture}
	\vspace{5pt}
    \end{minipage}

\noindent On the right, we depict the automaton as a state diagram and use colored weights to highlight the correspondence.  $\semizero$-valued transitions are omitted (and for the Tropical semiring $\semitrop$ we have $\semizero=\infty$).
The weight $\sem{\aut}(\syma\symb)$ that $\aut$ assigns to $\syma\symb$ is
\[
\begin{aligned}
\sem{\aut}(\syma\symb)
&{}=
\init \times \trans^{\syma} \times \trans^{\symb} \times \fin =
\begin{pmatrix} \textbf{\textcolor{myorange}{4}} & \textbf{\textcolor{myorange}{5}} \end{pmatrix}
\begin{pmatrix}
  \textbf{ \textcolor{myblue}{6}}  & \infty \\
  \infty &  \textbf{\textcolor{myblue}{4}}
\end{pmatrix}
\begin{pmatrix}
\textbf{\textcolor{mygreen}{6}} & \textbf{\textcolor{mygreen}{8}} \\
\infty & \textbf{\textcolor{mygreen}{12}}
\end{pmatrix}
\begin{pmatrix}
\infty \\ \textbf{\textcolor{mypurple}{3}}
\end{pmatrix}
= 21~.
\end{aligned}
\]
\end{example}

\section{Learning Algorithm}
\label{sec:learning}
In this section, we give a high-level overview of our learning algorithm (shown in \Cref{alg:learn}), which will take as input a {\em teacher} holding a weighted language $\mathcal L$, and return, if it terminates, a minimal WFA recognizing $\mathcal L$.

\subsection{The Teacher}
\label{sec:learning:basic}
We start with some basic notions. We fix an alphabet $\syms$ and a recognizable language $\lang \in \serisrat$ throughout this section. We employ the following standard notion of \emph{teachers} --- a component our algorithm is \mbox{assumed to have access to.}

\begin{definition}[Teacher]
	\label{def:teacher}
	A \emph{teacher} (for $\lang$) is a pair $\teacher \triangleq \langle \mem, \eq\rangle$, where:
	\begin{enumerate}
		\item\label{def:teacher1} $\mem \colon \words \to \semi$ is a computable function satisfying $\mem(\worda) = \lang(\worda)$;
		\item\label{def:teacher2} given a WFA $\aut$, $\eq(\aut) \in \{\true\} + \words$ is computable and satisfies
		\begin{enumerate}[topsep=5pt]
		\item
			$\eq(\aut) = \true ~\text{iff}~ \aut~\text{recognizes}~\lang$
			\quad\text{and}\quad
			\item $\eq(\aut)=\worda ~\text{implies}~\sem{\aut}(\worda) \neq \lang(\worda)~.$
		\end{enumerate}
	\end{enumerate}
\end{definition}
Invoking the teacher to determine $\mem(\worda)$ for some $\worda \in \words$ is called an \emph{output query}. Invoking $\eq(\aut)$ is called an \emph{equivalence query}.

We call (possibly infinite) nonempty sets $\vobs\subseteq \words$ of words \emph{observation sets}, which give rise to the straightforward notion of $\vobs$-correct WFAs --- WFAs that correctly recognize $\lang$ when restricting to words in $\vobs$:
\begin{definition}[$\vobs$-Correctness]
	$\aut$ is an $\vobs$-correct WFA (for $\lang$), if
	\[
	\forall \worda \in \vobs\colon \quad \sem{\aut}(\worda) = \lang(\worda)~.
	\]
\end{definition}

\subsection{The Main Loop of the Algorithm}
\label{sec:learning:algo}
\begin{algorithm}[t]
	\caption{$\flearn[\teacher]$}
	\label[algorithm]{alg:learn}
	\KwData{A teacher $\teacher = \langle \mem, \eq\rangle$ for $\lang$.}
	\KwResult{Minimal WFA $\hyp$ recognizing $\lang$.}
	$\vobs \gets \{\eword\}$; $n \gets 1$; $\hyp \gets \false$\label{learning_init}\;
	\While{$\true$}{
		\tcp{Loop Invariant: $\vobs$ is a finite observation set}
		\tcp{and $\hyp\neq \false$ implies $\hyp$ is an $n$-state $\vobs$-correct WFA}
		$\hyp \gets \fgetaut[\mem, \vobs, n]$\label{learning_getaut}\;
		\If{$\hyp=\false$}{
			\tcp{No $n$-state $\vobs$-correct WFA exists}
			$n \gets n + 1$; \textbf{continue}} \label{learning_continue}
		\tcp{$\hyp$ is an $n$-state $\vobs$-correct WFA}
		$\vcex \gets \eq(\hyp)$\;
		\If{$\vcex = \true$}{
			\tcp{$\hyp$ recognizes $\lang$}
			\Return $\hyp$; \label{learning_term}
		}
		\tcp{$\vcex$ is a counterexample to $\hyp$ recognizing $\lang$}
		$\vobs \gets \vobs \cup \{\vcex\}$\;
	}
\end{algorithm}
Fix a teacher $\teacher \triangleq \langle \mem, \eq\rangle$ for $\lang$ throughout this section. Our learning algorithm $\flearn[\teacher]$, depicted in \Cref{alg:learn}, takes $\teacher$ as input and, if it terminates, is guaranteed to produce a \emph{minimal} WFA recognizing $\lang$.

The main loop of the algorithm constructs new hypothesis automata using a subroutine $\fgetaut$ that satisfies the following high-level specification.
\begin{definition}[Specification of $\fgetaut$]
		\label{def:getaut}
	 The procedure
	\[
	\fgetaut[\mem, \vobs, n],
	\]
	given access to output queries via $\mem$, when it terminates, maps each finite observation set $\vobs$ and $n\geq 1$ to either $\false$ or a WFA $\hyp$. The implementation of the procedure is \emph{sound} if:
	\begin{enumerate}
		\item\label{def:getaut1} $\fgetaut[\mem, \vobs, n]  = \false$ iff there is no $n$-state $\vobs$-correct WFA,
		\item\label{def:getaut2} $\fgetaut[\mem, \vobs, n]  = \hyp$ implies $\hyp$ is an $n$-state $\vobs$-correct WFA.
	\end{enumerate}
	Moreover, if it always terminates then we say it is complete.
\end{definition}
Note that, in this section, $\fgetaut$ is treated as a black-box. Later, we will evaluate two distinct ways for implementing $\fgetaut$ (cf.\ \Cref{sec:getaut}).
Let us now explain how  \Cref{alg:learn} works. We first initialize the finite observation set $\vobs \gets \{\eword\}$ and $n \gets 1$~in~l.\ref{learning_init}. The subsequent loop, on every iteration, invokes $\fgetaut$ to determine whether an $n$-state $\vobs$-correct WFA $\hyp$ exists (l.\ \ref{learning_getaut}). If not, we increase $n$ (l. \ref{learning_continue}). Otherwise, we ask the teacher whether $\hyp$ is a sought-after WFA recognizing $\lang$ (l.\ \ref{learning_term}). If so, we return $\hyp$ (l. \ref{learning_term}). Otherwise, $\vcex$ will be a counterexample --- a word in $\words$ such that $\sem{\hyp}(\worda) \neq \lang(\worda)$ --- which we add to the observation set $\vobs$. This ensures that, indeed, we construct \enquote{increasingly correct} WFAs (w.r.t.\ the growing observation set $\vobs$).

Based on the specification of $\fgetaut$, we consider partial correctness and sufficient termination criteria for the algorithm.
\begin{theorem}[Partial Correctness of \Cref{alg:learn}]
	\label{thm:flearn_correct}
	Let $\fgetaut$ be sound. If $\flearn[\teacher]$ terminates, then
	\[
		\flearn[\teacher] = \hyp
		\qquad\text{implies}\qquad
		\text{$\hyp$ is a minimal WFA recognizing $\lang$}~.
	\]
\end{theorem}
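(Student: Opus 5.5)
The plan is to establish a loop invariant and then combine it with the termination condition. The return value $\hyp$ recognizes $\lang$ directly: the algorithm only returns at l.~\ref{learning_term}, where $\eq(\hyp)=\true$, and by \Cref{def:teacher}(\ref{def:teacher2}) this holds iff $\hyp$ recognizes $\lang$. Since $\fgetaut$ is sound, \Cref{def:getaut}(\ref{def:getaut2}) also gives that $\hyp$ has exactly $n$ states. So the real content is minimality, i.e.\ showing $n \le \langdim{\lang}$ at the moment of return.

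For minimality I would prove the following invariant by induction on loop iterations. \emph{At the start of every iteration, $\vobs$ is a finite observation set, and for every $m<n$ there is no $m$-state WFA recognizing $\lang$.} Equivalently, $n \le \langdim{\lang}$. Initially $n=1$ and the claim is vacuous, since WFAs have nonempty state sets. The only step that changes $n$ is l.~\ref{learning_continue}, reached when $\fgetaut[\mem,\vobs,n]=\false$. By soundness (\Cref{def:getaut}(\ref{def:getaut1})), no $n$-state $\vobs$-correct WFA exists. Any WFA recognizing $\lang$ is $\vobs$-correct for every $\vobs$, so no $n$-state WFA recognizes $\lang$. Together with the invariant for all $m<n$, this gives the invariant for $n+1$. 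The other branch only enlarges $\vobs$ by one word (so it stays finite) and leaves $n$ unchanged, so the invariant is preserved there too.

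At termination, $\hyp$ has $n$ states and recognizes $\lang$, and by the invariant no WFA with fewer than $n$ states recognizes $\lang$. Hence $\hyp$ is minimal and $n=\langdim{\lang}$.

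The main subtlety is that the invariant must quantify over \emph{all} $m<n$ rather than only $m=n-1$. Nonexistence of an $n$-state solution does not obviously imply nonexistence for smaller sizes unless one pads with unreachable zero states; quantifying over all $m<n$ avoids needing that padding argument. A second point to check is that the soundness conditions of $\fgetaut$ are stated only for finite $\vobs$, which is exactly why finiteness of $\vobs$ is carried in the invariant.
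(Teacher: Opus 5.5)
Your proof is correct and follows essentially the same route as the paper. In both, the key step is that $\fgetaut$ returning $\false$ at size $n$ rules out any $n$-state WFA recognizing $\lang$, because such a WFA would be $\vobs$-correct for every $\vobs$. The paper applies this once, by contradiction, at $n=\langdim{\lang}$, a value the loop must pass through if the final $n$ is larger; you carry the same reasoning as a forward invariant over all $m<n$, which makes explicit the one-at-a-time increase of $n$ that the paper uses implicitly.
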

\begin{proof}
	If $\flearn[\teacher] = \hyp$, then \Cref{alg:learn} terminates in l.\ \ref{learning_term}.  Hence, $\hyp$ recognizes $\lang$ by \Cref{def:teacher}.\ref{def:teacher2}. It remains to show that $\hyp$ is minimal. We proceed by contradiction. For that, assume $\hyp$ is \emph{not} minimal, i.e., the final value of $n$ in \Cref{alg:learn} exceeds $\langdim{\lang}$. Consequently, some call to $\fgetaut[\mem, \vobs, \langdim{\lang}]$ returned $\false$. By \Cref{def:getaut}, there is thus no  $\langdim{\lang}$-state $\vobs$-correct WFA, which is a contradiction since such a WFA exists for all $\vobs \subseteq \words$. \qed
\end{proof}

Next, we provide a sufficient termination condition for our learning algorithm.
\begin{theorem}[Sufficient Termination Condition for \Cref{alg:learn}]
	\label{thm:flearn_terminate}
	Let $\fgetaut$ be sound and complete. If, for all $n\geq 1$, there does not exist an infinite sequence $(\vobs_1, \hyp_1), (\vobs_2, \hyp_2), \dots$ of $n$-state $\vobs_i$-correct WFAs $\hyp_i$ such that
	\[
		\vobs_1 ~{}\subsetneq{}~ \vobs_2 ~{}\subsetneq{}~ \ldots~,
	\]
		then \flearn[\teacher] terminates.
\end{theorem}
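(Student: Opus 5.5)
We argue by contradiction: assume $\flearn[\teacher]$ does not terminate. Since $\fgetaut$ is complete, every call in l.~\ref{learning_getaut} terminates, and $\mem$ and $\eq$ are computable by \Cref{def:teacher}. Hence non-termination means the while loop runs through infinitely many iterations.

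\textbf{Step 1: $n$ is bounded.} We claim $n$ never exceeds $\langdim{\lang}$, by the argument from the proof of \Cref{thm:flearn_correct}. Suppose $n$ were incremented from $\langdim{\lang}$ to $\langdim{\lang}+1$. Then some call $\fgetaut[\mem,\vobs,\langdim{\lang}]$ returned $\false$. By soundness (\Cref{def:getaut}.\ref{def:getaut1}), no $\langdim{\lang}$-state $\vobs$-correct WFA would exist. But a minimal WFA for $\lang$ is $\vobs$-correct for every $\vobs$, a contradiction. Since $n$ is nondecreasing, it is therefore eventually constant, say $n = n^\ast$ from some iteration $k_0$ onwards.

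\textbf{Step 2: extract the forbidden sequence.} After iteration $k_0$, every iteration has $\fgetaut$ return a WFA $\hyp_i \neq \false$; otherwise $n$ would increase again. By soundness (\Cref{def:getaut}.\ref{def:getaut2}), $\hyp_i$ is an $n^\ast$-state $\vobs_i$-correct WFA, where $\vobs_i$ is the observation set at that call. Since the loop does not return, $\eq(\hyp_i)$ yields a counterexample $\worda_i$ with $\sem{\hyp_i}(\worda_i)\neq\lang(\worda_i)$ (\Cref{def:teacher}.\ref{def:teacher2}), and the algorithm sets $\vobs_{i+1} = \vobs_i\cup\{\worda_i\}$.

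\textbf{Step 3: strictness.} The key point is that the chain $\vobs_1\subsetneq\vobs_2\subsetneq\cdots$ is strict. Because $\hyp_i$ is $\vobs_i$-correct and $\sem{\hyp_i}(\worda_i)\neq\lang(\worda_i)$, we get $\worda_i\notin\vobs_i$, so $\vobs_i\subsetneq\vobs_{i+1}$. Thus $(\vobs_i,\hyp_i)_{i}$ is an infinite sequence of $n^\ast$-state $\vobs_i$-correct WFAs over a strictly increasing chain of observation sets. This is exactly what the hypothesis of \Cref{thm:flearn_terminate} forbids for $n = n^\ast$, giving the contradiction.

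The main obstacle is only bookkeeping: checking that non-termination really forces infinitely many loop iterations, which uses completeness of $\fgetaut$ and computability of the teacher, and that $n$ stabilizes, which is Step~1. Both are short.
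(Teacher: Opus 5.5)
Your proof is correct and follows the paper's argument. It combines the same two facts: the no-infinite-chain hypothesis stops the loop from staying at a fixed $n$ forever, and soundness plus the existence of a $\langdim{\lang}$-state WFA for $\lang$ stops $n$ from exceeding $\langdim{\lang}$, with your contradiction phrasing via an eventually constant $n$ being only a repackaging. Your explicit checks that $\worda_i \notin \vobs_i$ (so the chain is strict) and that completeness of \fgetaut{} forces infinitely many loop iterations fill in details the paper's proof leaves implicit.
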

\begin{proof}
	 First observe that, for every $n$ encountered during execution, \Cref{alg:learn} eventually reaches l.\ \ref{learning_continue} or l.\ \ref{learning_term} since, otherwise, the algorithm would produce an infinite chain of \enquote{increasingly $\vobs$-correct} $n$-state WFAs.

	 For all $n<\langdim{\lang}$, it reaches l.\ \ref{learning_continue} since terminating in l.\ \ref{learning_term} would contradict \Cref{thm:flearn_correct}: For $n < \langdim{\lang}$, there is no $n$-state WFA recognizing $\lang$. Consequently, \Cref{alg:learn} eventually reaches $n = \langdim{\lang}$. In that case, it eventually terminates in l.\
	\ref{learning_term} because terminating in l.\ \ref{learning_continue} contradicts \Cref{def:getaut}: For all $\vobs$, there exists an $\langdim{\lang}$-state $\vobs$-correct WFA.\qed
\end{proof}

\begin{corollary}[Termination for Finite Semirings]
	Take $\fgetaut$ to be sound and complete. If $\semi$ is finite, then $\flearn[\teacher]$ terminates.
\end{corollary}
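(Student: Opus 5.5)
The plan is to verify the hypothesis of \Cref{thm:flearn_terminate} for finite $\semi$ and then invoke that theorem directly. Since $\fgetaut$ is assumed sound and complete, it suffices to show that for every fixed $n\geq 1$ there is no infinite sequence $(\vobs_1,\hyp_1),(\vobs_2,\hyp_2),\dots$ of $n$-state $\vobs_i$-correct WFAs with $\vobs_1\subsetneq\vobs_2\subsetneq\cdots$. The catch is that the hypothesis of \Cref{thm:flearn_terminate}, read literally, asks this for arbitrary (possibly infinite) observation sets and arbitrary chains. A chain such as $\vobs_i \triangleq \{\eword,\dots\}$ growing by words all correctly handled by one fixed automaton can exist whenever $\lang$ has an $n$-state recognizer. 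So I will argue about the chains that \Cref{alg:learn} actually produces, re-running the reasoning of the proof of \Cref{thm:flearn_terminate}.

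First I count the candidates. For fixed $n$, up to naming of states, an $n$-state WFA is determined by $\init\in\semi^{n}$, the matrices $\trans^{\syma}\in\semi^{n\times n}$ for each $\syma\in\syms$, and $\fin\in\semi^{n}$. Hence there are at most $|\semi|^{2n+|\syms| n^2}$ of them, which is finite because both $\semi$ and $\syms$ are finite.

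Next, fix $n$ and look at the iterations of the main loop while $n$ keeps this value. On each such iteration where $\fgetaut$ returns some $\hyp_i$ and $\eq(\hyp_i)$ yields a counterexample $\worda_i$, we have $\sem{\hyp_i}(\worda_i)\neq\lang(\worda_i)$ by \Cref{def:teacher}. The word $\worda_i$ is added to $\vobs$, and $\vobs$ only grows afterwards. Every later hypothesis $\hyp_j$ with $j>i$ is $\vobs$-correct for a superset of $\vobs\cup\{\worda_i\}$, by soundness (\Cref{def:getaut}), so $\sem{\hyp_j}(\worda_i)=\lang(\worda_i)\neq \sem{\hyp_i}(\worda_i)$. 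Thus $\sem{\hyp_j}\neq\sem{\hyp_i}$, and in particular $\hyp_j\neq\hyp_i$. The hypotheses produced for a fixed $n$ are therefore pairwise distinct, so by the counting step there are only finitely many of them.

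Consequently, for each $n$, after finitely many iterations the algorithm either returns or reaches the continue line and increments $n$; completeness of $\fgetaut$ guarantees each call terminates. The remaining argument is exactly the one in the proof of \Cref{thm:flearn_terminate}. For $n<\langdim{\lang}$ the algorithm cannot return, so $n$ increases. At $n=\langdim{\lang}$, $\fgetaut$ never returns $\false$, because an $\langdim{\lang}$-state $\vobs$-correct WFA always exists. So the algorithm must return. The main subtlety is the pairwise-distinctness step: it uses the fact that counterexamples accumulate in $\vobs$, rather than relying on the literal hypothesis of \Cref{thm:flearn_terminate}.
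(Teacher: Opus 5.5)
Your proof is correct, and it is more careful than the paper's. The paper's proof is a single sentence: the hypothesis of \Cref{thm:flearn_terminate} is trivially satisfied because there are only finitely many $n$-state WFAs over a finite semiring.

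As you observe, that hypothesis, read literally, is never satisfied. For $n=\langdim{\lang}$, a minimal recognizer of $\lang$ is $\vobs$-correct for every $\vobs$, so repeating it along any strictly increasing chain of finite observation sets gives exactly the infinite sequence the theorem forbids. Having finitely many candidate WFAs does not rule out repetitions.

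The paper's one-liner, like its Remark claiming that no infinite sequence $\hyp_1,\hyp_2,\ldots$ exists at all, silently assumes the hypotheses in the chain are pairwise distinct. Your counterexample-accumulation step is exactly what justifies this for the chains \Cref{alg:learn} produces. Each $\hyp_i$ is wrong on the counterexample added right after it, and every later hypothesis is right on it, so the hypotheses at a fixed $n$ have pairwise distinct semantics.

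The paper's route is shorter and reuses \Cref{thm:flearn_terminate} as a black box. Yours is self-contained and actually sound. It also shows how \Cref{thm:flearn_terminate} should be repaired: additionally require each $\hyp_i$ to \emph{not} be $\vobs_{i+1}$-correct. The algorithm's chains satisfy this, and it restores the cardinality argument.

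One small wording point: you count WFAs only up to renaming of states, so $\hyp_j\neq\hyp_i$ alone does not give finiteness. Derive it from the pairwise distinct semantics, which makes the hypotheses pairwise non-isomorphic; you have already established that.
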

\begin{proof}
	The condition from \Cref{thm:flearn_terminate} is trivially satisfied since, for every $n\geq 1$, there exist only finitely many $n$-state WFAs over a finite semiring.\qed
\end{proof}

\begin{remark}[Beyond finite semirings] Since a sound and complete $\fgetaut$ can be implemented in a brute-force manner for all finite semirings, it follows that \Cref{alg:learn} is correct and terminates for all finite semirings.  The reader might wonder if one can prove soundness and completeness for more general semirings, since finiteness is a stringent restriction. One might have to impose further restrictions on the sequence $\hyp_1, \hyp_2, \ldots$ of hypotheses to guarantee termination \emph{beyond} finite semirings. Indeed, if $\semi$ is finite, then for fixed $n$ there are only finitely many $n$-state WFA up to isomorphism (finitely many choices of initial/final weights and transition weights). Hence no infinite sequence $\hyp_1$, $\hyp_2$, \ldots exists at all. If $\semi$ is not finite, there are infinitely many distinct $n$-state WFA, so a pure cardinality argument disappears, and completeness is much more nuanced. This is in contrast with \lstar algorithms where correctness based on Hankel matrices provides a relation between the state spaces of subsequent hypotheses $\hyp_1$, $\hyp_2$, \ldots appearing in the algorithm, and for which algebraic properties such as the semiring being a field or Noetherian can then be used to argue termination. As we will describe in \Cref{sec:related}, the study of termination for these \lstar-like algorithms is subtle and has resulted in a variety of recent results.
\end{remark}

As we will see later in \Cref{sec:eval}, even though we do not guarantee termination, our implementation is capable of learning minimal WFAs over several infinite semirings, even for examples in which the existence of complete learning algorithms is open (e.g., the Tropical or the Bottleneck semiring).

\section{Computing $n$-State $\vobs$-Correct WFAs}
\label{sec:getaut}
In this section, we present two procedures --- a naive baseline as well as a more sophisticated technique inspired by a coalgebraic perspective on WFAs --- to implement $\fgetaut$ as specified in \Cref{def:getaut}. Both procedures reduce the problem of determining the existence of $n$-state $\vobs$-correct WFAs to a \emph{satisfiability problem of equality constraints} over the given semiring $\semi$.

We will show that sound and complete implementations of $\fgetaut$ exist if the underlying semiring is \emph{decidable} --- these are semirings for which the satisfiability of {\em finite equation systems} is decidable. Put more formally,
finite systems $\eqsys$ of equations over $\semi$ adhere to the grammar
\begin{align*}
	\eqsys ~{}\longrightarrow{}~
	 t = t ~|~ \eqsys \wedge \eqsys 
	\qquad\quad
	t  ~{}\longrightarrow{}~
	\semia \in \semi
	~|~
	 x
	 ~|~
	 t \semiadd t~|~ t \semimul t~,
\end{align*}
where $x$ is taken from some set of variables. This yields the following notion:
\begin{definition}[Decidable Semiring]
	\label{def:decidable}
	We say that a semiring $\langle \semi, \semiadd, \semimul, \semizero, \semione\rangle$ is \emph{decidable}, if there is a computable function
	$
	\sat{\eqsys}
	$
	mapping each finite system $\eqsys$ of equations to either (i) $\false$ iff $\eqsys$ has no solution, or (ii) a function $\model$ from the variables in $\eqsys$ to values in $\semi$, representing a satisfying \mbox{assignment for $\eqsys$.}
\end{definition}
Every finite semiring is trivially decidable by brute-forcing all possible solutions to $\eqsys$. Interestingly, various \emph{infinite} semiring are decidable as well by reducing to decidable arithmetical theories. All the example semirings we consider are decidable.
We will provide two distinct ways of obtaining the following result:
\begin{theorem}
	\label{thm:getaut_for_decidable}
	For every decidable semiring, there is a sound and complete implementation of $\fgetaut$.
\end{theorem}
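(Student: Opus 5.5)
We will reduce $\fgetaut[\mem,\vobs,n]$ directly to one call of the oracle $\sat{\cdot}$ from \Cref{def:decidable}, using the naive encoding. Fix $n\geq 1$ and a finite observation set $\vobs$.

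\textbf{Variables.} Introduce fresh variables for every entry of an $n$-state WFA over a fixed state set $\states=\{q_1,\dots,q_n\}$:
\begin{itemize}
\item $x^{\init}_i$ for the initial weights;
\item $x^{\fin}_i$ for the final weights;
\item $x^{\syma}_{ij}$ for each $\syma\in\syms$ and each transition $q_i\to q_j$.
\end{itemize}
Because $\syms$ and $n$ are finite, there are finitely many variables.

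\textbf{Equations.} For each word $\worda=\syma_1\cdots\syma_k\in\vobs$, expand the matrix product in the semantics into a term. Let
\[
t_\worda \triangleq \bigoplus_{i_0,\dots,i_k} x^{\init}_{i_0}\odot x^{\syma_1}_{i_0i_1}\odot\cdots\odot x^{\syma_k}_{i_{k-1}i_k}\odot x^{\fin}_{i_k}.
\]
For $\worda=\eword$ this is just $\bigoplus_i x^{\init}_i\odot x^{\fin}_i$. The sum is finite, and since $\semiadd$ is associative and commutative and $\semimul$ is associative, the bracketing is irrelevant. Hence $t_\worda$ is a term of the grammar. Then query $\mem(\worda)\in\semi$, which is computable, and set
\[
\eqsys_{\vobs,n}\triangleq\bigwedge_{\worda\in\vobs} t_\worda=\mem(\worda).
\]
The system is finite because $\vobs$ is finite.

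\textbf{Key lemma.} Assignments of the variables correspond bijectively to $n$-state WFAs with state set $\states$. For any assignment, $t_\worda$ evaluates to $\sem{\aut}(\worda)$ of the corresponding WFA $\aut$. This follows by unfolding matrix multiplication, since entry $(i,j)$ of a product is $\bigoplus_\ell(\cdot)_{i\ell}\odot(\cdot)_{\ell j}$, and then using distributivity to flatten the nested sums into the path sum. It is the only real verification step. The point needing care is that we use only distributivity and associativity, never commutativity of $\semimul$, and the order of factors in $t_\worda$ matches the order in the product. Also, every $n$-state WFA is isomorphic to one on $\states$, and renaming states does not change semantics.

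\textbf{Definition of $\fgetaut$.} Return $\false$ if $\sat{\eqsys_{\vobs,n}}=\false$. Otherwise return the WFA read off from $\model$.
\begin{itemize}
\item \emph{Completeness.} Building the system terminates, because it makes finitely many $\mem$ calls and constructs finitely many terms, and $\mathsf{SAT}$ is computable.
\item \emph{Soundness.} By the lemma, $\eqsys_{\vobs,n}$ is satisfiable iff an $n$-state $\vobs$-correct WFA exists, which gives \Cref{def:getaut}.\ref{def:getaut1}. Any model yields such a WFA, which gives \Cref{def:getaut}.\ref{def:getaut2}.
\end{itemize}

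The terms $t_\worda$ grow as $n^{|\worda|+1}$, which is harmless for this theorem. A second, more compact route would add auxiliary variables for the state-weight vectors $\init\times\trans^{\syma_1}\times\cdots\times\trans^{\syma_i}$ of prefixes of observed words, with one defining equation per prefix. That is the coalgebraic style, and its correctness would follow by induction on prefix length.
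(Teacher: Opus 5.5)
Your proposal is correct and is essentially the paper's own first argument: the naive baseline of \Cref{sec:getaut:naive} also treats $\init$, $\trans^{\syma}$, $\fin$ as unknowns and imposes $\mem[\worda] = \init\times\trans^{\syma_1}\times\cdots\times\trans^{\syma_k}\times\fin$ for each $\worda\in\vobs$, and your path-sum expansion just makes explicit that these constraints are terms of the grammar. One small detail to add is that $\model$ only assigns the variables occurring in $\eqsys_{\vobs,n}$, so transition entries for letters not occurring in $\vobs$ (e.g.\ all of them when $\vobs=\{\eword\}$) must be filled in arbitrarily, which does not affect $\vobs$-correctness; the paper's second route uses the suffix-indexed generator variables $\gena_i^{\worda}$ of \Cref{fig:eqsys} rather than your prefix vectors.
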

For the semirings considered in this paper, we employ SMT solving over decidable theories to find solutions to said systems of equations, or for proving their absence. The respective encodings are presented in \Cref{sec:decidable_semirings}.

\subsection{A Naive Baseline Implementation}
\label{sec:getaut:naive}
Let $n \geq 1$ and let $\vobs$ be a nonempty finite observation set.
Our naive baseline implementation of $\fgetaut[\mem,n,\vobs]$ is derived directly from the semantics of WFAs presented in \Cref{sec:prelim}: We treat the initial weight vector $\init$, the transition matrices $\trans^{\syma}$, and the final weight vector $\fin$ --- all of which having the fixed dimension $n$ --- as unknowns and solve the finite equation system
\[
	\eqsys(n,\vobs) ~{}\triangleq{}~ \bigwedge_{\worda\in\vobs} \mem[\worda] =
	 \begin{cases}
		\init \times \fin &\text{if $\worda = \eword$}\\
		\init \times \trans^{\syma_1} \times \ldots \times  \trans^{\syma_k} \times \fin & \text{if $\worda = \syma_1\ldots\syma_k$}~.
	\end{cases}
\]
It is immediate that $\sat{\eqsys(n,\vobs)} = \false$ iff there is no $n$-state $\vobs$-correct WFA, and that, otherwise, every model of $\eqsys$ gives rise to an $n$-state $\vobs$-correct WFA. Hence, if the semiring $\semi$ is decidable, \Cref{thm:getaut_for_decidable} follows.
\begin{example}
\label{ex:enc-naive}
For the $\semitrop$-WFA of \Cref{ex:aut1}, and $\vobs$ = \{\syma\symb\}:
\[
\eqsys(2,\vobs) ~{}\triangleq{}~ 21 =
\begin{pmatrix} \init_1 & \init_2 \end{pmatrix}
\times
\begin{pmatrix}
\trans^\syma_{1,1} & \trans^\syma_{1,2} \\
\trans^\syma_{2,1} & \trans^\syma_{2,2}
\end{pmatrix}
\times
\begin{pmatrix}
\trans^\symb_{1,1} & \trans^\symb_{1,2} \\
\trans^\symb_{2,1} & \trans^\symb_{2,2}
\end{pmatrix}
\times
\begin{pmatrix}
\fin_1 \\ \fin_2
\end{pmatrix}
\]
We can see that the formula encodes directly the correctness of the automaton's weights, and therefore its language, w.r.t. the word $\syma\symb$.
\end{example}

\subsection{An Optimized Implementation via State Languages}
\label{sec:getaut:non_naive}
\noindent
The central idea for our main approach is to exploit a coalgebraic perspective on the semantics of WFAs~\cite{bonsangue_sound_2013} which enables a compositional view of the language of an automaton as a linear combination of languages of its individual states. Formally, for $\aut = \langle \states, \init, \trans, \fin \rangle$, we define $\sem{\aut,\state}$, the \emph{state language} of $\state \in \states$ as
\[
		\sem{\aut,\state}(\eword) \triangleq \fin(\state)
		~\text{and}~
		\sem{\aut,\state}(\syma\worda) \triangleq \trans^{a}_{\state,\state_1} \semimul \sem{\aut,\state_1}(\worda) \semiadd \ldots \semiadd \trans^{a}_{\state,\state_n} \semimul \sem{\aut,\state_n}(\worda)~.
\]
This yields the following relationship with the language $\sem{\aut}$ of $\aut$:
\begin{theorem}
	\label{thm:autsem_alt}
	Let $\aut = \langle \states, \init, \trans, \fin \rangle$ be a WFA with $\states = \{\state_1,\ldots,\state_n\}$.  Then:
	\[
		\forall \worda \in \words\colon \quad \sem{\aut}(\worda) = \init(\state_1) \semimul \sem{\aut,\state_1}(\worda) \semiadd \ldots \semiadd \init(\state_n) \semimul \sem{\aut,\state_n}(\worda)~.
	\]
\end{theorem}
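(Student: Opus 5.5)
The key observation is that the state language, stacked into a column vector over $\states$, is the product of the transition matrices applied to the final weight vector. Concretely, for $\worda \in \words$ let $v_\worda \in \semimod{\semi}{\states\times\unit}$ be the column vector with $v_\worda(\state) \triangleq \sem{\aut,\state}(\worda)$. The plan is to prove, by induction on the length of $\worda$, the auxiliary claim
\[
	v_{\eword} = \fin
	\qquad\text{and}\qquad
	v_{\syma_1\ldots\syma_k} = \trans^{\syma_1}\times\cdots\times\trans^{\syma_k}\times\fin~.
\]
The theorem then follows by multiplying on the left by $\init$. The right-hand side of the statement is exactly the row–column product $\init \times v_\worda$, and by the auxiliary claim this equals the expression defining $\sem{\aut}(\worda)$.

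For the base case $\worda=\eword$, the definition gives $\sem{\aut,\state}(\eword)=\fin(\state)$ for every $\state$, so $v_\eword=\fin$. For the inductive step $\worda=\syma\worda'$, the defining clause of the state language states that
\[
	v_{\syma\worda'}(\state)=\bigoplus_j \trans^{\syma}_{\state,\state_j}\semimul v_{\worda'}(\state_j)~.
\]
This is precisely the $\state$-th entry of $\trans^{\syma}\times v_{\worda'}$. The induction hypothesis gives $v_{\worda'}=\trans^{\syma_2}\times\cdots\times\fin$ (or $\fin$ if $\worda'=\eword$), and substituting yields the claim.

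The step needing some care is that the matrix products are unparenthesized. The definition of $\sem{\aut}$ writes $\init\times\trans^{\syma_1}\times\cdots\times\fin$ without brackets, while the induction naturally builds the product right-to-left, as $\init\times(\trans^{\syma_1}\times(\cdots\times\fin))$. I will therefore invoke associativity of matrix multiplication over a semiring. This is standard: it uses associativity of $\semimul$, commutativity and associativity of $\semiadd$ to reorder the finite double sums, and both distributivity laws. Once that is in place, the remaining steps are direct unfolding of definitions.
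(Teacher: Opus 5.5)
Your proof is correct and follows essentially the same route as the paper, which states the theorem without a formal proof and justifies it only by the informal recursive reading of state languages (consume $\syma$, continue from the successors, then weight by $\init$). Your induction showing that the vector of state languages equals $\trans^{\syma_1}\times\cdots\times\trans^{\syma_k}\times\fin$, followed by left multiplication with $\init$, formalizes exactly that reading, and your appeal to associativity of semiring matrix multiplication correctly reconciles the right-nested products your induction produces with the unparenthesized product in the definition of $\sem{\aut}$.
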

State languages are naturally captured recursively, i.e., if $\worda = \eword$, then $\sem{\aut,\state}(\eword)$ is $\state$'s output weight $\fin(\state)$ since there are no symbols to be consumed. Otherwise, i.e., if $\worda = \syma\worda'$, then
$\sem{\aut,\state}(\syma\worda')$ is obtained from \enquote{consuming} $\syma$, and proceeding by running $\aut$ on $\worda'$ from the successors of $\state$. Given the state languages  $\sem{\aut,\state}(\worda)$ at every state $\state\in\states$, we obtain the weight $\sem{\aut}(\worda)$ by taking the initial weights given by $\init$ into account.
\begin{example}
For automaton $\aut$ from \Cref{ex:aut1} we compute below $\sem{\aut}(\syma\symb)$ through its state languages.
\begin{align*}
\sem{\aut, q_1}(\eword) \triangleq \fin(q_1) = \infty \qquad \sem{\aut, q_2}(\eword) \triangleq \fin(q_2) = 3
\end{align*}
\vspace{-1.7\baselineskip}
\begin{alignat*}{3}
\sem{\aut, q_1}(\symb) &{}={} \trans^\symb_{1,1} \semimul \sem{\aut, q_1}(\eword) \semiadd \trans^\symb_{1,2} \semimul \sem{\aut, q_2}(\eword) &{}={} \;6\; \semimul \infty \semiadd\,\; 8 \semimul 3 &= 11 \\
\sem{\aut, q_2}(\symb) &{}={} \trans^\symb_{2,1} \semimul \sem{\aut, q_1}(\eword) \semiadd \trans^\symb_{2,2} \semimul \sem{\aut, q_2}(\eword) &{}={} \infty \semimul \infty \semiadd 12 \semimul 3 &= 15 \\
\sem{\aut, q_1}(\syma\symb) &{}={} \trans^\syma_{1,1} \semimul \sem{\aut, q_1}(\symb) \semiadd \trans^\syma_{1,2} \semimul \sem{\aut, q_2}(\symb) &{}={} 6 \semimul 11 \semiadd \infty \semimul 15 &= 17 \\
\sem{\aut, q_2}(\syma\symb) &{}={} \trans^\syma_{2,1} \semimul \sem{\aut, q_1}(\symb) \semiadd \trans^\syma_{2,2} \semimul \sem{\aut, q_2}(\symb) &{}={} \infty \semimul 11 \semiadd 4 \semimul 15 &{}={} 19 \\
\end{alignat*}
\vspace{-2.4\baselineskip}
\begin{align*}
\sem{\aut}(\syma\symb) = \init(q_1) \semimul \sem{\aut, q_1}(\syma\symb) \semiadd \init(q_2) \semimul \sem{\aut, q_2}(\syma\symb) = 4 \semimul 17 \semiadd 5 \semimul 19 = 21
\end{align*}
\end{example}

In our approach to learning we will explore \Cref{thm:autsem_alt} to derive an encoding of a more state-local characterization of the existence of $n$-state $\vobs$-correct WFAs. Our experiments in \Cref{sec:eval} reveal this characterization to be highly beneficial when compared to the naive baseline from the previous section. Towards formalizing this characterization, we introduce:
\begin{definition}[\vobs-Witness]
	\label{def:wit}
	Let $\vobs$ be an observation set. An \mbox{\emph{$\vobs$-witness} (of $\lang$)} is a triple
	\[
		\wit ~{}\triangleq{}~ \langle \gens, \init, \trans\rangle,
	\]
	where:
	\begin{enumerate}[topsep=0pt]
		\item $\gens = (\gena_i)^n_{i=1}$ is an indexed family of $\vobs$-\emph{generators} with type $\suffixes[\vobs]\to\semi$,

	\item\label{def:wit:stab} $\trans$ is a $\syms$-indexed family of matrices $\trans^{\syma} \colon \semimod{\semi}{\gens\times\gens}$ satisfying
			\[
			\underbrace{\forall \gena_i \in \gens\colon \forall\syma\worda \in \suffixes[\vobs] \colon\quad \gena_i(\syma\worda) = \trans^\syma_{i,1} \semimul \gena_1(\worda) \semiadd \ldots \semiadd \trans^\syma_{i, n} \semimul \gena_n(\worda)}_{\text{$\trans$ \emph{stabilizes $\gens$ for $\vobs$}}}~,
			\]
		\item\label{def:wit:span} $\init \colon \semimod{\semi}{1\times \gens}$ is a weight vector satisfying
		\[
		\underbrace{\forall \worda \in \vobs \colon\quad
		\lang(\worda) = \init_1 \semimul \gena_1(\worda) \semiadd \ldots \semiadd \init_n \semimul \gena_n(\worda)}_{\text{$\init$ \emph{spans} $\lang$ in $\gens$ for $\vobs$}}
		\]
	\end{enumerate}
\end{definition}
Notice that we do not require $\vobs$ to be finite.
We call the number $n$ of generators in $\gens$ the \emph{dimension of $\wit$} and denote it by $\witdim{\wit}$. Intuitively, they correspond to the state languages in a WFA, with the imposed constraints on $\gens$, $\trans$, and $\init$ making sure that, when restricting to words in $\vobs$, these state languages are indeed captured by a WFA-structure.
Hence, by \Cref{thm:autsem_alt}, there is a one-to-one correspondence between witnesses and WFAs: Every $n$-dimensional $\vobs$-witness yields an $n$-state $\vobs$-correct automaton, and vice versa. It is in this sense that $\wit$ \emph{witnesses} the existence of the sought-after WFAs. Put formally:
\begin{theorem}[WFA-Witness Correspondence]
	\label{thm:wfa_wit_corr}
	Let  $n\geq 1$ and let $\vobs$ be an observation set.  We have:
	\begin{enumerate}
		\item\label{thm:wfa_wit_corr1} If  $\wit = \langle \gens, \init, \trans\rangle $ is an $n$-dimensional $\vobs$-witness, then
		\(
			\aut_\wit {}={} \langle \gens, \init,\trans,\fin\rangle\),
			{where} \(
			\fin(\gena_i) = \gena_i(\epsilon)
		\)
		is an $n$-state $\vobs$-correct WFA.
		\item\label{thm:wfa_wit_corr2} If $\aut =  \langle \{\state_1,\ldots,\state_n\}, \init, \trans, \fin \rangle$ is an $\vobs$-correct WFA, then the triple
			$\wit_\aut = \langle (\gena_i)^n_{i=1}, \init, \trans \rangle$
		is an $n$-dimensional $\vobs$-witness, where $\gena_i \colon \suffixes[\vobs] \to \semi$ are uniquely determined by
		\(\gena_i(\varepsilon) = \fin(\state_i) \) and \(
			\gena_i(\syma \worda) \triangleq \trans^\syma_{i,1} \serimul \gena_1(\worda) \semiadd \cdots \semiadd \trans^\syma_{i,n} \serimul \gena_{n}(\worda)
		\), for all $\syma\worda \in \suffixes[\vobs]$.
	\end{enumerate}
\end{theorem}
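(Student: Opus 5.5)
Both parts hinge on one auxiliary fact, proved by induction on word length. Let $\aut = \langle \states, \init, \trans, \fin\rangle$ be any WFA with $\states=\{\state_1,\ldots,\state_n\}$. Let $(\gena_i)_{i=1}^n$ be functions $\suffixes[\vobs]\to\semi$ with $\gena_i(\eword)=\fin(\state_i)$ that are stabilized by $\trans$ in the sense of \Cref{def:wit}.\ref{def:wit:stab}. Then
\[
\gena_i(\worda)=\sem{\aut,\state_i}(\worda)\quad\text{for all } \worda\in\suffixes[\vobs].
\]
The key observation is that $\suffixes[\vobs]$ is suffix-closed and contains $\eword$ whenever $\vobs$ is nonempty. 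So if $\syma\worda\in\suffixes[\vobs]$, then $\worda\in\suffixes[\vobs]$ as well, and induction on $|\worda|$ over $\suffixes[\vobs]$ is well-founded; $\vobs$ need not be finite.
\begin{itemize}
\item Base case $\worda=\eword$: both sides equal $\fin(\state_i)$.
\item Step $\syma\worda$: the stabilization equation unfolds $\gena_i(\syma\worda)$ exactly as the recursive definition of $\sem{\aut,\state_i}(\syma\worda)$ does. Applying the induction hypothesis to each $\gena_j(\worda)$ closes the case.
\end{itemize}

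For part~\ref{thm:wfa_wit_corr1}, take $\aut_\wit$ with $\fin(\gena_i)=\gena_i(\eword)$. This is well defined because $\eword\in\suffixes[\vobs]$. By construction the generators satisfy the hypotheses of the auxiliary fact with respect to $\aut_\wit$, so $\gena_i=\sem{\aut_\wit,\gena_i}$ on $\suffixes[\vobs]$. Now let $\worda\in\vobs\subseteq\suffixes[\vobs]$. \Cref{thm:autsem_alt} gives $\sem{\aut_\wit}(\worda)=\bigoplus_i\init_i\semimul\sem{\aut_\wit,\gena_i}(\worda)$. This equals $\bigoplus_i\init_i\semimul\gena_i(\worda)$, which equals $\lang(\worda)$ by the spanning condition \ref{def:wit:span}. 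Hence $\aut_\wit$ is $\vobs$-correct, and it has $n$ states.

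For part~\ref{thm:wfa_wit_corr2}, the plan has three steps.
\begin{enumerate}
\item \emph{Well-definedness of the $\gena_i$.} The defining clauses are a structural recursion on words within the suffix-closed set $\suffixes[\vobs]$, so they determine the $\gena_i$ uniquely; indeed $\gena_i=\sem{\aut,\state_i}$ restricted to $\suffixes[\vobs]$.
\item \emph{Stabilization.} This holds literally by the defining equation of the $\gena_i$.
\item \emph{Spanning.} For $\worda\in\vobs$, combine $\vobs$-correctness of $\aut$ with \Cref{thm:autsem_alt} and the identification $\gena_i=\sem{\aut,\state_i}$. This gives $\lang(\worda)=\sem{\aut}(\worda)=\bigoplus_i\init(\state_i)\semimul\gena_i(\worda)$.
\end{enumerate}
Thus $\wit_\aut$ is an $n$-dimensional $\vobs$-witness.

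The main obstacle is mild. It is the careful justification that $\suffixes[\vobs]$ is closed under taking tails and contains $\eword$, so that both the recursion defining the $\gena_i$ and the induction are legitimate even for infinite $\vobs$. Everything else is bookkeeping that unfolds definitions and applies \Cref{thm:autsem_alt}.
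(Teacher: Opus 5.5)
Your proof is correct and takes the same approach as the paper. The paper gives no separate proof; it justifies the correspondence by noting that generators play the role of state languages on $\suffixes[\vobs]$ and then invoking \Cref{thm:autsem_alt}. Your induction over the suffix-closed set $\suffixes[\vobs]$, which also works for infinite $\vobs$, makes that argument explicit.
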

Notice that observation sets being nonempty implies $\eword \in\suffixes[\vobs]$, which is why the output weight vector $\fin$ can be readily read off the weights $\gena_i(\eword)$.

With \Cref{thm:wfa_wit_corr}, it is immediate that the existence of $n$-state $\vobs$-correct automata can be characterized via the existence of $n$-dimensional $\vobs$-witnesses:
\begin{corollary}[Existence of $\vobs$-Correct WFA]
	\label{cor:wfa_exists}
	Let  $n\geq 1$ and let $\vobs$ be an observation set. The following statements are equivalent:
	\begin{enumerate}
		\item There exists an $n$-dimensional $\vobs$-witness  $\wit = (\gens, \init, \trans)$.
		\item There exists an $n$-state $\vobs$-correct WFA $\aut$.
	\end{enumerate}
\end{corollary}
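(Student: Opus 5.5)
The corollary follows directly from \Cref{thm:wfa_wit_corr}, so I will prove the two implications separately, one from each part of that theorem.

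For (1) $\Rightarrow$ (2), suppose $\wit = \langle \gens, \init, \trans\rangle$ is an $n$-dimensional $\vobs$-witness. I will apply \Cref{thm:wfa_wit_corr}.\ref{thm:wfa_wit_corr1} and take $\aut_\wit = \langle \gens, \init, \trans, \fin\rangle$ with $\fin(\gena_i) = \gena_i(\eword)$. The theorem says this is an $n$-state $\vobs$-correct WFA, which is exactly what (2) asks for. The only point to check is that $\fin$ is well-defined. Since $\vobs$ is nonempty, every $\worda\in\vobs$ has $\eword$ as a suffix, so $\eword\in\suffixes[\vobs]$ and each $\gena_i(\eword)$ is defined.

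For (2) $\Rightarrow$ (1), suppose $\aut = \langle \{\state_1,\ldots,\state_n\}, \init, \trans, \fin\rangle$ is an $n$-state $\vobs$-correct WFA. I will apply \Cref{thm:wfa_wit_corr}.\ref{thm:wfa_wit_corr2} to get the $n$-dimensional $\vobs$-witness $\wit_\aut = \langle (\gena_i)_{i=1}^n, \init, \trans\rangle$. Here the generators $\gena_i$ are the state languages $\sem{\aut,\state_i}$ restricted to $\suffixes[\vobs]$. They are well-defined by recursion on word length, because $\suffixes[\vobs]$ is suffix-closed: whenever $\syma\worda\in\suffixes[\vobs]$, also $\worda\in\suffixes[\vobs]$. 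This gives (1).

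Since both directions are direct instantiations, I do not expect a real obstacle. The only subtle step is the one hidden inside \Cref{thm:wfa_wit_corr}.\ref{thm:wfa_wit_corr2}: the spanning condition of the witness needs the identity $\lang(\worda)=\bigoplus_i \init_i\semimul\gena_i(\worda)$ for $\worda\in\vobs$. This follows from $\vobs$-correctness of $\aut$ together with \Cref{thm:autsem_alt}, and I take it as given from the earlier result.
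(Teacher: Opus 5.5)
Your proposal is correct and follows the paper's argument: the paper treats the corollary as immediate from \Cref{thm:wfa_wit_corr}, with part~1 giving (1)$\Rightarrow$(2) and part~2 giving (2)$\Rightarrow$(1). Your additional checks are the same well-definedness points the paper makes in passing, namely that $\eword\in\suffixes[\vobs]$ because $\vobs$ is nonempty, and that $\suffixes[\vobs]$ is suffix-closed.
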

\begin{figure}[t]
	\begin{align}
	\eqsys(n,\vobs) \quad{}\triangleq{}\quad
	&
	\underbrace{\bigwedge_{i=1}^{n} \; \bigwedge_{\syma\worda\in\suffixes[\vobs]} \gena_i^{\syma\worda}
		= (\trans^\syma_{i, 1}  \semimul   \gena_1^{\worda}) \semiadd \ldots \semiadd (\trans^\syma_{i, n}  \semimul   \gena_n^{\worda})}_{\text{$\trans$ stabilizes $\gens$ for $\vobs$ (\Cref{def:wit}.\ref{def:wit:stab})}} \notag
	\\
	&{}\wedge{} \underbrace{\bigwedge_{\worda \in \vobs} \mem[\worda] = (\init_1 \semimul \gena_1^{\worda}) \semiadd \ldots \semiadd  (\init_n \semimul \gena_n^{\worda})}_{\text{$\init$ spans $\lang$ for $\vobs$ (\Cref{def:wit}.\ref{def:wit:span})}} \notag
\end{align}
\caption{Equation system $\eqsys(\vobs,n)$ characterizing the existence of $n$-state $\vobs$-correct WFAs. $(\gena_i^{\worda})_{i \in n,\worda\in\suffixes[\vobs]}$, $(\trans^\syma_{i, j})_{\syma\in\syms,(i,j)\in n\times n}$, and $(\init_i)_{i \in n}$ are $\semi$-valued variables.}
\label{fig:eqsys}
\end{figure}
We now arrive at our sought-after system of equations $\eqsys(n, \vobs)$ depicted in \Cref{fig:eqsys}. For all $n\geq 1$ and finite observation sets $\vobs$, $\eqsys(n, \vobs)$  has a solution iff there exists an $n$-dimensional $\vobs$-witness. Moreover, every model of $\eqsys(n, \vobs)$ gives rise to such a witness $\wit$ and therefore also to an $n$-state $\vobs$-correct automaton $\aut_\wit$ by \Cref{thm:wfa_wit_corr}. The crucial observation is that, since $\vobs$ is finite, so is $\suffixes[\vobs]$ and thus all potential generators $\gena_i$ range over a finite domain. The existence of suitable generators $\gena_i$, transition matrices $\trans^\syma$, and initial weights $\iota$ can thus be expressed in terms of a system of equations, where we introduce a variable $\gena_i^{\worda}$ for each application of some $\worda \in \suffixes[\vobs]$ to some generator $\gena_i$.

\begin{example}
\label{ex:enc-swal}
For the $\semitrop$-WFA of \Cref{ex:aut1}, and $\vobs$ = \{\syma\symb\} we have that:
\begin{align*}
\eqsys(2, \vobs) ~{}&\triangleq{}~ 21 = (\init_1 \semimul \gena_1^{\syma\symb}) \semiadd (\init_2 \semimul \gena_2^{\syma\symb}) \\
~{}&\wedge{}~ \gena_1^{\syma\symb} = (\trans^\syma_{1,1} \semimul \gena^\symb_1) \semiadd (\trans^\syma_{1,2} \semimul \gena^\symb_2)
~{}\wedge{}~ \gena_1^{\symb} = (\trans^\symb_{1,1} \semimul \gena^\eword_1) \semiadd (\trans^\symb_{1,2} \semimul \gena^\eword_2) \\
~{}&\wedge{}~ \gena_2^{\syma\symb} = (\trans^\syma_{2,1} \semimul \gena^\symb_1) \semiadd (\trans^\syma_{2,2} \semimul \gena^\symb_2)
~{}\wedge{}~ \gena_2^{\symb} = (\trans^\symb_{2,1} \semimul \gena^\eword_1) \semiadd (\trans^\symb_{2,2} \semimul \gena^\eword_2)
\end{align*}
\end{example}

As \Cref{ex:enc-naive,ex:enc-swal} illustrate, the optimized encoding introduces more equations and more variables than the naive baseline. Counterintuitively,
this apparent overhead is precisely what enables its scalability. As the observation set $\vobs$ grows, the intermediate variables
expose structure shared between observations with matching suffixes, which
the SMT solver can exploit.

\begin{example}
Suppose $\vobs$ grows from $\{\syma\symb\}$ to $\{\syma\symb, \syma\syma\symb\}$.
The naive encoding adds a single, but more complex, equation:
\[
\mem[\syma\syma\symb] ~{}={}~
\begin{pmatrix} \init_1 & \init_2 \end{pmatrix}
\times
\begin{pmatrix}
\trans^\syma_{1,1} & \trans^\syma_{1,2} \\
\trans^\syma_{2,1} & \trans^\syma_{2,2}
\end{pmatrix}
\times
\begin{pmatrix}
\trans^\syma_{1,1} & \trans^\syma_{1,2} \\
\trans^\syma_{2,1} & \trans^\syma_{2,2}
\end{pmatrix}
\times
\begin{pmatrix}
\trans^\symb_{1,1} & \trans^\symb_{1,2} \\
\trans^\symb_{2,1} & \trans^\symb_{2,2}
\end{pmatrix}
\times
\begin{pmatrix} \fin_1 \\ \fin_2 \end{pmatrix}
\]
The optimized encoding, on the other hand, adds one spanning equation
\[
\mem[\syma\syma\symb] ~{}={}~ (\init_1 \semimul \gena_1^{\syma\syma\symb}) \semiadd
(\init_2 \semimul \gena_2^{\syma\syma\symb})
\]
together with two stabilization equations for the new variables
$\gena_1^{\syma\syma\symb},\gena_2^{\syma\syma\symb}$:
\[
\gena_1^{\syma\syma\symb} = (\trans^\syma_{1,1} \semimul \gena_1^{\syma\symb})
\semiadd (\trans^\syma_{1,2} \semimul \gena_2^{\syma\symb}) ~{}\wedge{}~
\gena_2^{\syma\syma\symb} = (\trans^\syma_{2,1} \semimul \gena_1^{\syma\symb})
\semiadd (\trans^\syma_{2,2} \semimul \gena_2^{\syma\symb})
\]
Crucially, the variables $\gena_1^{\syma\symb},\gena_2^{\syma\symb}$ introduced when $\syma\symb$ was
first added to $\vobs$ are reused, so the shared suffix $\syma\symb$ of both
observations is reflected directly in shared subterms of the equation system.
\end{example}

\section{Deciding Equation Systems via SMT Solving}
\label{sec:decidable_semirings}
\noindent
Next, we discuss how to solve the equation systems, as defined in the beginning of \Cref{sec:getaut}, for the semirings considered in this paper. We do this by reducing the problem of computing $\sat{\eqsys}$ to satisfiability problems over decidable theories, which enables using SMT solvers. Every system $\eqsys$ of equations over the Boolean semiring directly corresponds to a propositional satisfiability problem. The encoding is thus straightforward. Let us consider the remaining semirings:

\subsection{The Tropical Semiring}
Solving equation systems $\eqsys$ over the tropical semiring
\[ \semitrop ~{}\triangleq{}~ \langle \mathbb{N} \cup \{\infty\}, \min, +, \infty, 0\rangle\]
 boils down to a satisfiability problem over the decidable theory \theorylia of quantifier-free linear integer arithmetic. Variables are encoded from $\seminat \cup \{ \infty \}$ into $\{-1, 0, 1, 2, \dots \} \subseteq \mathbb{Z}$, where $-1$ is used as a special value representing $\infty$.
The semiring operations are then encoded as:
\begin{align*}
t \semiadd  t' &~{}\triangleq{}~ \mathbf{if}\; \underbrace{t = {-1} \vee t' = {-1}}_{\text{$t$ or $t'$ equals $\infty$}}
\;\mathbf{then}\; \underbrace{\mathit{max}(t, t')}_{{}\text{$t$ if $t'=\infty$ else $t'$}}
\;\mathbf{else}\; \underbrace{\mathit{min}(t, t')}_{\text{usual minimum in $\mathbb{N}$}} \\
t  \semimul  t' &~{}\triangleq{}~ \mathbf{if}\; \underbrace{t = {-1} \vee t' = {-1}}_{\text{$t$ or $t'$ equals $\infty$}}
\;\mathbf{then}\; \underbrace{{-1}}_{t + t'= \infty}
	\;\mathbf{else}\; \underbrace{(t + t')}_{\text{usual sum in $\mathbb{N}$}}~,
\end{align*}
where both $\mathit{max}$ and $\mathit{min}$ are encoded via if-then-else in the obvious way.

\subsection{The Bounded Tropical Semirings}
Let $b\in\mathbb{N}$ and consider the $b$-bounded tropical semiring
\[ \semibtrop{b} ~{}\triangleq{}~ \langle \{0,\ldots,b\} \cup \{\infty\}, \min, \badd{b}, \infty, 0\rangle~.\]
We consider two encodings for solving systems $\eqsys$ of equations over that semiring: A \theorylia encoding similar to the previous section and an encoding \theorybv into the quantifier-free theory of fixed-size bit vectors. Both encode the domain $\{0,\ldots,b\} \cup \{\infty\}$ into the finite set $\{0,\ldots,b,b+1\}$, where $b+1$ represents $\infty$.

\paragraph*{The \theorylia Encoding.}
Since $b+1$ represents $\infty$, we straightforwardly let
\begin{align*}
	t \semiadd t' ~{}\triangleq{}~ \mathit{min}(t, t') \qquad \quad
	t \semimul  t' ~{}\triangleq{}~ \mathit{min}((t + t'), b+1)~.
\end{align*}

\paragraph*{The \theorybv Encoding.}
We map the integer domain $\{0, \dots, b, b + 1\}$ into its (unsigned) binary number representation in \emph{fixed-size bit vectors} of appropriate size $l$, and use fixed-sized bit vector addition and if-then-else to define
\begin{align*}
	t \semiadd t' ~{}\triangleq{}~ \mathit{min}(t, t') \qquad \quad
	t \semimul  t' ~{}\triangleq{}~ \mathit{min}((t + t'), b+1)
\end{align*}
analogously to the \theorylia encoding. We must, however, take care: While just $\lceil \mathit{log}_2(b + 2) \rceil$ bits are sufficient to represent $\{0, \dots, b, b + 1\}$, \emph{intermediate} results from the non-truncated addition $t+t'$ might overflow. The biggest value we must account for is thus $(b+1) + (b+1) = 2(b+1)$, therefore needing to represent $2(b+1)+1 = 2b+3$ values, and so $l = \lceil \mathit{log}_2 ( 2b+3) \rceil$ bits are sufficient to represent all values while avoiding overflows.

\subsection{The Bottleneck Semiring}
Consider the \emph{bottleneck} semiring
\(
\semibot ~{}\triangleq{}~ \langle \seminat \cup \{-\infty, \infty\}, \max, \min, -\infty, \infty\rangle.
\)

We encode the carrier set of $\seminat \cup \{-\infty, \infty\}$ into $\{-2, -1, 0, 1, 2, \dots\} \subseteq \mathbb{Z}$, using $-2$ to represent $-\infty$ and $-1$ for $\infty$. The operations are then encoded in a similar fashion as the tropical operations, accounting for the special infinity values, and minimizing/maximizing otherwise:
\begin{alignat*}{3}
t \semiadd  t' ~{}\triangleq{}~ &\mathbf{if}\; \underbrace{t = {-1} \vee t' = {-1}}_{\text{$t$ or $t'$ equals $\infty$}}\;&&\mathbf{then}\; \underbrace{-1}_{\text{$\max(t, t') = \infty$}} && \\
\;&\mathbf{else}\; \mathbf{if}\; \underbrace{t = {-2}}_{\text{$t$ equals $-\infty$}}\;&&\mathbf{then}\; \underbrace{t'}_{\text{$\max(-\infty, t') = t'$}} && \\
\;&\mathbf{else}\; \mathbf{if}\; \underbrace{t' = {-2}}_{\text{$t'$ equals $-\infty$}}\;&&\mathbf{then}\; \underbrace{t}_{\text{$\max(t, -\infty) = t$}}
\;&&\mathbf{else}\; \underbrace{\max(t, t')}_{\text{$\max$ in $\mathbb{N}$}} \\[3ex]
t \semimul  t' ~{}\triangleq{}~ &\mathbf{if}\; \underbrace{t = {-2} \vee t' = {-2}}_{\text{$t$ or $t'$ equals $-\infty$}}\;&&\mathbf{then}\; \underbrace{-2}_{\text{$\min(t, t') = -\infty$}} && \\
\;&\mathbf{else}\; \mathbf{if}\; \underbrace{t = {-1}}_{\text{$t$ equals $\infty$}}\;&&\mathbf{then}\; \underbrace{t'}_{\text{$\min(\infty, t') = t'$}} && \\
\;&\mathbf{else}\; \mathbf{if}\; \underbrace{t' = {-1}}_{\text{$t'$ equals $\infty$}}\;&&\mathbf{then}\; \underbrace{t}_{\text{$\min(t, \infty) = t$}}
\;&&\mathbf{else}\; \underbrace{\min(t, t')}_{\text{$\min$ in $\mathbb{N}$}}
\end{alignat*}

\begin{remark}[Other Semirings]
The semirings presented above admit linear encodings (\theorybv, \theorylia) of our
equation systems. While this tends to provide performance benefits at the SMT level,
semirings such as the Reals, Viterbi, or Probability semirings (see, e.g., \cite[Tab.\ 1]{Batz2022WeightedProgramming}) could equally be
considered due to their decidable, non-linear (\theorynra) theory. By
\Cref{thm:getaut_for_decidable}, $\fgetaut$ remains complete and $\vobs$-correct
automata can be learned. We view this as a foundation for tackling the
long-standing open problem of learning probabilistic automata, and leave to future
work a systematic exploration leveraging, e.g., approximate SMT solving~\cite{gao_DCompleteDecisionProcedures_2012} or
incremental linearization techniques~\cite{cimatti_IncrementalLinearizationSatisfiability_2018}.
\end{remark}

\section{Experimental Evaluation}
\label{sec:eval}
\noindent
In this section we evaluate our algorithm through a wide range of experiments. We start by describing our prototype implementation, and move on to pose research questions on its performance. We develop a comprehensive set of benchmarks and analyze them to determine the impact of internal optimizations on performance, as well as the algorithm's positioning relative to both the naive baseline (\Cref{sec:getaut:naive}), and a state-of-the-art WFA learning algorithm.

\subsection{Implementation}
To enable an extensive experimental evaluation, we have implemented both \Cref{alg:learn} with the naive baseline from \Cref{sec:getaut:naive} (referred to as $\algonaive$)
and with the state-language-based encoding from \Cref{sec:getaut:non_naive} (referred to as \textbf{S}MT-based \textbf{W}eighted \textbf{A}utomata \textbf{L}earning algorithm ($\algonew$)) in OCaml.

Our implementation supports all semirings discussed in this paper (Boolean, Tropical, bounded Tropical, and Bottleneck), and uses \textsf{Z3}~\cite{de_moura_z3_2008} as a back-end solver. Equation systems over these semirings are decided via the encodings discussed in \Cref{sec:decidable_semirings}. In the following, we discuss our tool's capability of employing incremental SMT solving and how we implement teachers.

\paragraph*{Incremental SMT Solving.}
When the observation set $\vobs$ maintained by \Cref{alg:learn} grows (for fixed $n$), the respective equation systems consist (cf.\ \Cref{sec:getaut:naive} and \Cref{fig:eqsys}) of more and more conjuncts for both $\algonaive$ and $\algonew$. This renders our approach suitable for \emph{incremental} SMT solving. Whether incremental SMT solving is enabled or not can be controlled via a flag. We evaluate the effect of incremental solving in \Cref{sec:ablation}.

\paragraph*{Teachers.}
Our benchmarks aim at learning a given \emph{target WFA $\aut$}, which is known to the teacher but not to the learner. We implement its teacher $\teacher_\aut \triangleq \langle \mem, \eq \rangle$ as follows: Output queries $\mem[\worda]$ are implemented straightforwardly by computing $\aut$'s semantics $\sem{\aut}(\worda)$. Since equivalence queries are generally undecidable~\cite{almagor_whats_2022}, we follow the standard procedure in automata learning~\cite{ferreira_prognosis_2021,fiterau-brostean_combining_2016,aarts_formal_2013,aarts_inference_2010} and resort to \emph{partial equivalence checks}: We let $\eq[\hyp] \triangleq \true$ iff $\sem{\aut}(\worda) = \sem{\hyp}(\worda)$ for the first $N$ words in $\words$ w.r.t.\ a fixed lexicographic order on the symbols, and we let $\eq[\hyp] \triangleq \worda$ only if $\worda$ is the first counterexample found under these $N$ words. For our experiments, we let $N \triangleq 5000 \cdot |\aut| \cdot |\Sigma|$. We have found that this provides a vast cover of reachable automaton behavior\footnote{For the automata learned, on average, counterexamples were found only in the first 5.02\% of this test budget, with over 96.7\% of runs utilizing < 50\% of it.}, while remaining computationally tractable. Developing more advanced testing algorithms for WFAs remains an interesting open area of research, with existing techniques such as the Wp-Method~\cite{fujiwara_TestSelectionBased_1991} being largely motivated for the Boolean setting.

\subsection{Research Questions and Benchmarks}
We have devised three research questions ({\bf RQ}s) for evaluating our algorithms.
\begin{itemize}[leftmargin=*,itemsep=5pt,topsep=10pt]
	\item[]\hspace{-0.28cm} {\bf RQ1}: What is the impact of incremental SMT solving and {bit-vector encodings?}
	\item[]\hspace{-0.315cm} {\bf RQ2}: How does $\algonew$ perform compared to the $\algonaive$ algorithm?
	\item[]\hspace{-0.315cm} {\bf RQ3}: How does $\algonew$ perform compared to the $\algopid$ algorithm?
\end{itemize}
{\bf RQ1} forms an \emph{ablation study} aiming at determining the impact of incremental SMT solving and alternative encodings. {\bf RQ2} enables evaluating whether $\algonew$ is indeed superior to the naive baseline $\algonaive$. In {\bf RQ3},  $\algopid$ refers to a state-of-the-art WFA learning algorithm~\cite{heerdt_learning_2020} for finite semirings, or principal ideal domains. For the semirings considered in this paper, $\algopid$ thus produces (not necessarily minimal) WFAs over the Boolean and bounded Tropical semirings.

\paragraph*{Benchmarks.}
Answering the above-mentioned research questions requires a diverse set of benchmarks, covering a range of different target WFAs for all considered semirings.
Classically, one would rely on existing datasets of realistic targets~\cite{neider_BenchmarksAutomataLearning_2019} for automata learning evaluations.
The line of work in weighted automata learning has not yet contributed such a dataset.
Therefore, we have devised over 5000 randomly generated benchmarks.

Each run was allocated a memory limit of 8GB, and a time limit of 2 hours. Below we describe the range of parameters considered for these experiments. For all semirings considered in this paper, we generate target automata with state sizes ranging from 1 to 12, and alphabet sizes ranging from 2 to 50 symbols.

For $\semibool$-WFAs, we use a well-studied family of minimal NFAs~\cite[Proposition~6.2]{goos_residual_2001}, which is parametric in the number of states and the alphabet size. This yields our scalable set of target WFAs over the Boolean semiring.

For $\semitrop$-WFAs (both unbounded and bounded) and $\semibot$-WFAs, we generated the target WFAs randomly.
Generating random \emph{minimal} automata is, however, famously hard~\cite{hutchison_experimental_2005}, with heuristics only known for DFAs and NFAs over the Boolean semiring. Exact algorithms would rely on generating random automata, and then minimizing them, which is known to be undecidable for $\semitrop$-WFAs. Therefore, we adapt a heuristic from~\cite{hutchison_experimental_2005}: For fixed number of states and alphabet size, we generate transitions in such a way that, for each state $\state$ and symbol $\syma$, the \emph{expected transition density} --- the number of outgoing transitions from $\state$ via $\syma$ with non-$\semizero$ weight --- is $1.25$. For the numeric semirings, these non-$\semizero$ weights are sampled uniformly at random from $\{0,\ldots,\min(1000, b)\}$, where $b$ is either the bound for the bounded Tropical semiring or $\infty$ for the infinite semirings.

\subsection{RQ1: Ablation Study}
\label{sec:ablation}
We start with our ablation study and investigate:
\begin{itemize}[leftmargin=*]
	\item \emph{Incrementality}: What is the impact of \emph{incremental} SMT solving?
	\item \emph{Semiring Encoding}: For the bounded Tropical semiring ($b=100$), what is the impact of the different encodings ($\theorybv$ vs $\theorylia$)?
\end{itemize}

\begin{figure}[t]
	\centering
	\includegraphics[width=0.42\textwidth]{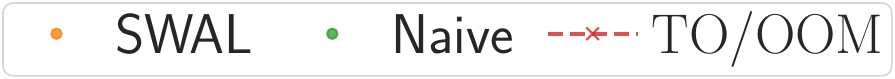}

	\vspace{0.4em}
	\begin{subfigure}[b]{0.49\textwidth}
		\centering
		\includegraphics[width=\linewidth,alt={\input{./figures/alt/ablation-incremental.txt}\unskip}]{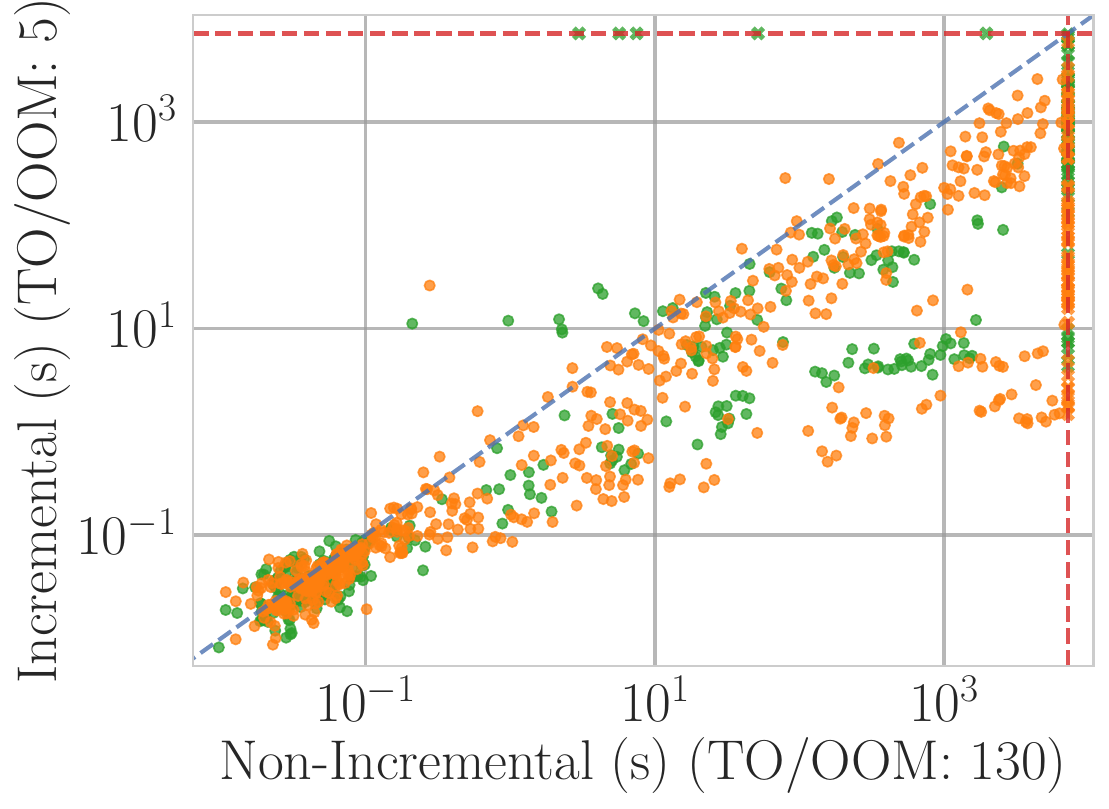}
		\subcaption{Every dot represents a benchmark.}
		\label{fig:ablation-incremental}
	\end{subfigure}
	\hfill
	\begin{subfigure}[b]{0.49\textwidth}
		\centering
		\includegraphics[width=\linewidth,alt={\input{./figures/alt/ablation-semiring.txt}\unskip}]{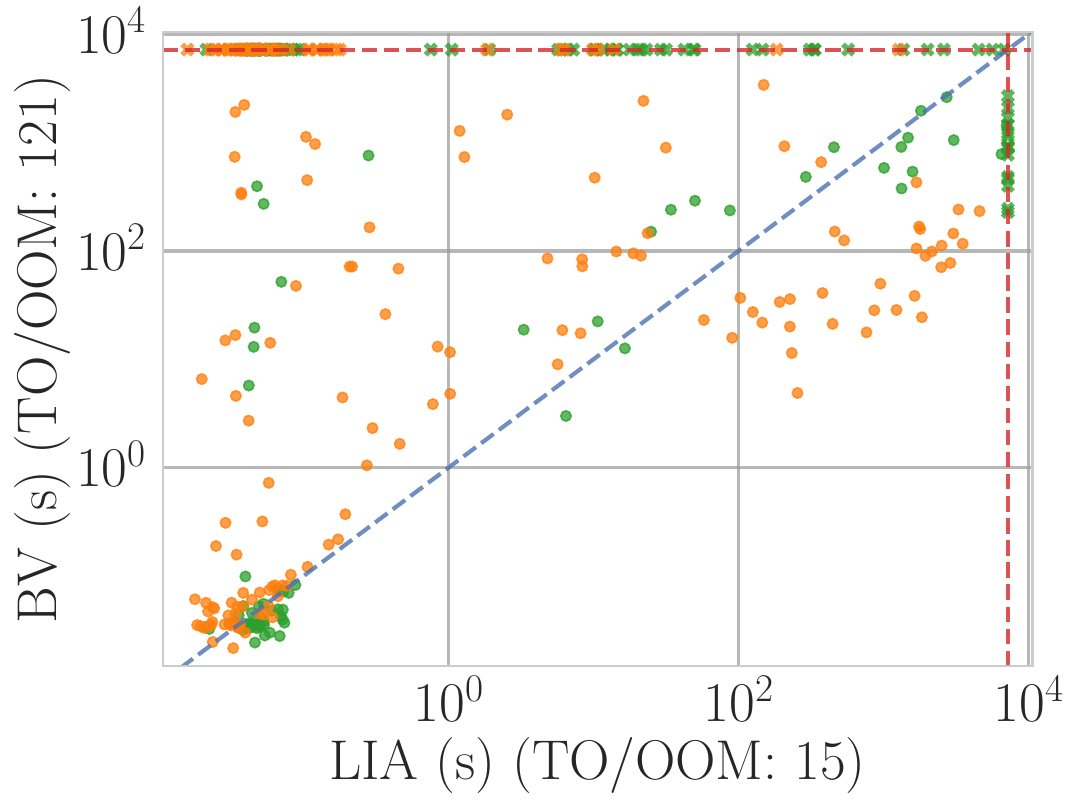}
\subcaption{Every dot represents a benchmark.}
		\label{fig:ablation-semiring}
	\end{subfigure}

\caption{Effect on learner time (seconds in log-log) of incremental SMT solving (left), and bit vector (\theorybv) vs linear integer arithmetic (\theorylia) encoding for bounded tropical semirings (right). TO/OOM count provided per axis.}
	\label{fig:ablation}
\end{figure}

\paragraph*{Incrementality.}
\Cref{fig:ablation-incremental} indicates the \emph{learner time in seconds}, i.e., the total time spent executing the learner, minus the time spent on computing answers to queries, for both \algonaive and \algonew in log-log scale. We consider \emph{all} benchmarks for all semirings and encodings. Each dot represents one benchmark. The $x$-axis indicates the learner time \emph{without} incremental SMT solving and the $y$-axis \emph{with} incremental solving. Whenever a dot is \emph{below} the diagonal line, incremental solving outperforms non-incremental solving (and vice versa). The red dashed line indicates timeouts / out-of-memory faults (TO/OOM).

We observe that incremental solving almost always results in a shorter learning time, with less benchmarks hitting TO/OOM ($5$ for incremental and $130$ for non-incremental). This effect is observed in both the $\algonaive$ and $\algonew$ algorithms. We conclude that the incremental nature of the way our constraint sets are built is highly beneficial for incremental SMT solving.

\paragraph*{Semiring Encoding.}
\Cref{fig:ablation-semiring} again indicates the learner time in seconds for all benchmarks over the \emph{bounded} ($b=100$) Tropical semiring.  The $x$-axis indicates the learner time when using the \theorylia encoding whereas the $y$-axis indicates the time for the \theorybv encoding. We investigate both \algonaive and \algonew. Whenever a dot is \emph{below} the diagonal line, \theorybv outperforms \theorylia (and vice versa). The red dashed line indicates TO/OOM.

Even though \theorybv occasionally takes a lead over \theorylia, the results indicate that \theorylia mostly outperforms \theorybv. This is particularly striking when considering the number of TO/OOM hits: For \theorybv, $121$ benchmarks hit TO/OOM, whereas for \theorylia, only $15$ benchmarks hit these thresholds.

\subsection{RQ2: Comparison with $\algonaive$ Algorithm}

\begin{figure}[t]
	\centering
	\includegraphics[width=\textwidth]{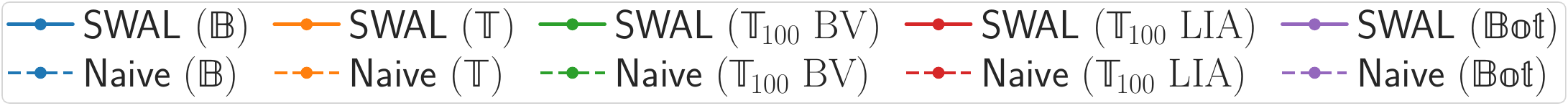}
	\vspace{0.4em}
	\begin{subfigure}[b]{0.49\textwidth}
		\centering
		\includegraphics[width=\linewidth,alt={\input{./figures/alt/comparison-naive-time.txt}\unskip}]{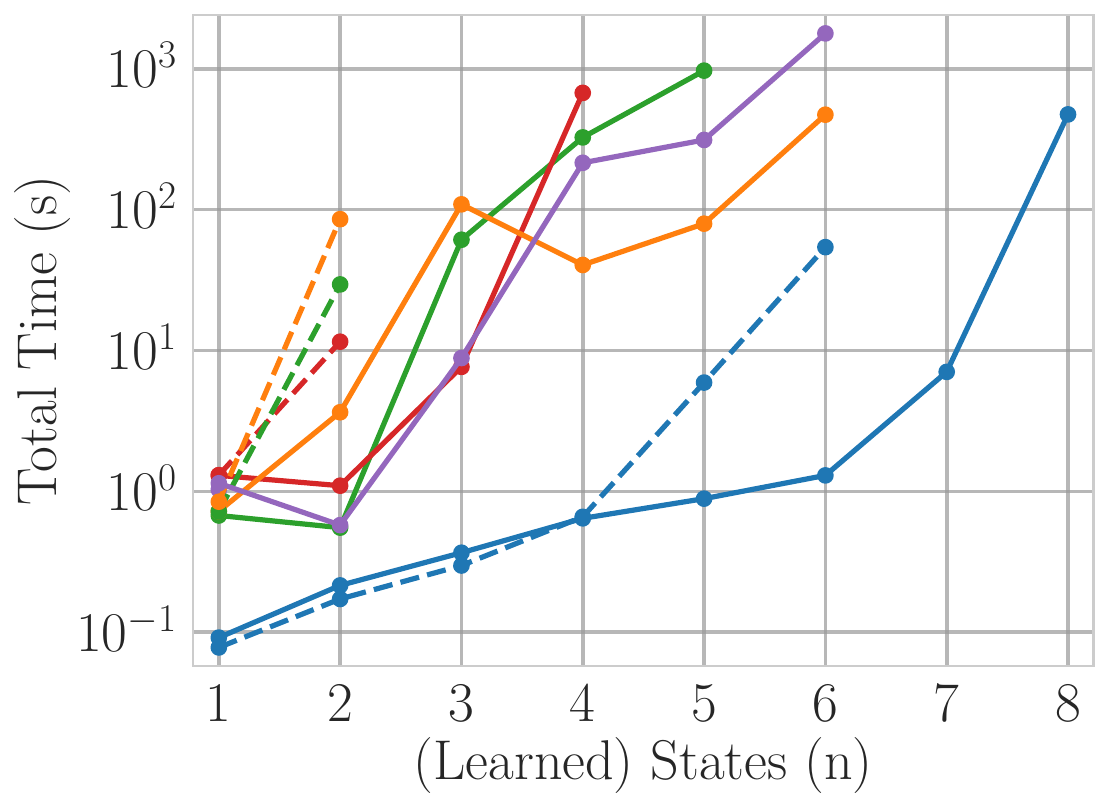}
		\subcaption{}
		\label{fig:comparison-naive-time}
	\end{subfigure}
	\hfill
	\begin{subfigure}[b]{0.49\textwidth}
		\centering
		\includegraphics[width=\linewidth,alt={\input{./figures/alt/comparison-naive-queries.txt}\unskip}]{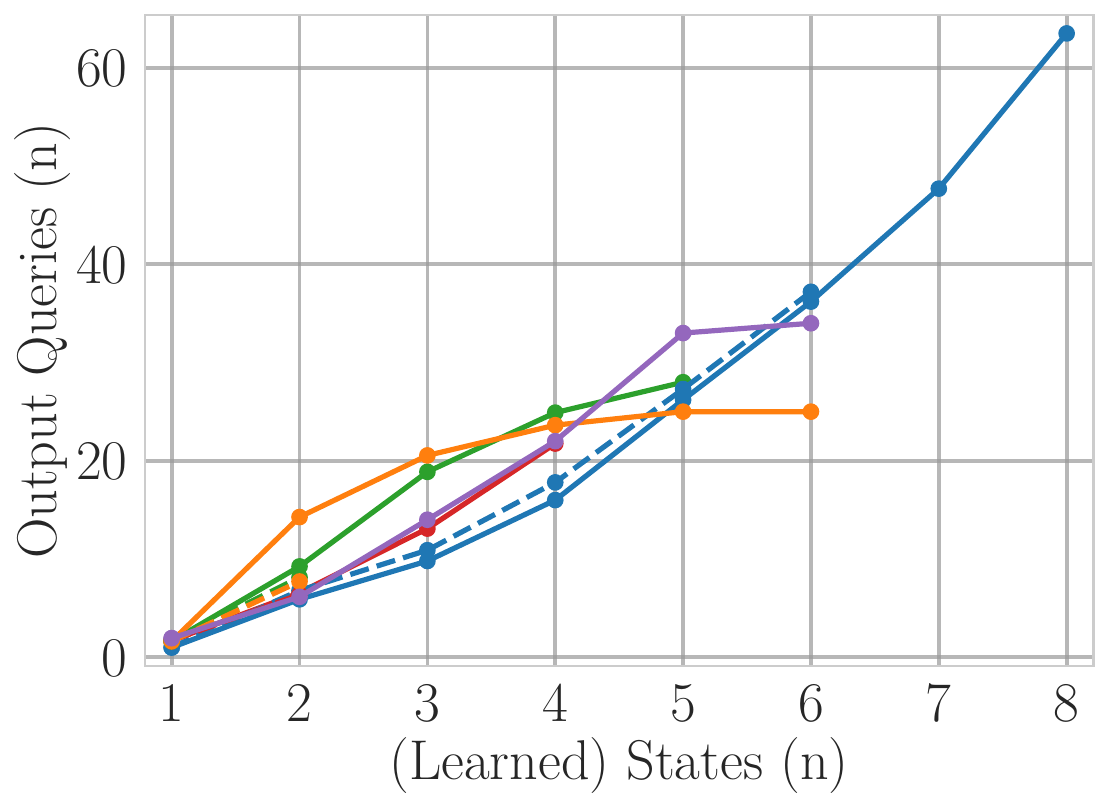}
		\subcaption{}
		\label{fig:comparison-naive-queries}
	\end{subfigure}

	\caption{Comparison of \algonaive vs \algonew on average total run-time (seconds in log-log, left), and average number of output queries (count, right), as the number of learned states increases. TO/OOM represented by end-of-line.}
	\label{fig:comparison-naive}
\end{figure}

For our comparison of $\algonew$ and $\algonaive$, we compare (i) how the algorithms scale in terms of state sizes of \emph{learned} automata and (ii) their relative query efficiency, i.e., the number of output queries required. We consider benchmarks with incrementality, and $|\syms|=2$ for all semirings. Results are depicted in \Cref{fig:comparison-naive}.

\Cref{fig:comparison-naive-time} shows the \emph{average total time} in seconds (log-scale) on the $y$-axis to learn automata with $n$-states, where $n$ is indicated on the $x$-axis. Dashed lines indicate run-times for \algonaive and solid lines run-times for \algonew, each being distinguished by the respective semiring. We observe that \algonew is capable of learning minimal NFAs (i.e., WFAs over the Boolean semiring) with up to $8$ states within the given time- and memory limits (blue solid line). \algonaive, on the other hand, only learns minimal NFAs with up to $6$ states (blue dashed line). For all other semirings, \algonaive even only scales up to at most $2$ states, whereas \algonew scales up to $6$ states. Moreover, \algonew consistently outperforms \algonaive, sometimes by orders of magnitude. We conclude that \algonew is superior to \algonaive in terms of scalability, and that our more sophisticated equation system based on state-languages is highly beneficial for SMT solving.

\Cref{fig:comparison-naive-queries} shows the \emph{average number of output queries} on the $y$-axis to learn automata with $n$-states, where $n$ is indicated on the $x$-axis. As is to be expected, both \algonaive and \algonew require almost the same number of output queries. Hence, we conclude that  \algonew consistently outperforms \algonaive while not requiring more interaction with the teacher.

\subsection{RQ3: Comparison with $\algopid$ Algorithm}

\begin{figure}[t]
  \centering
	\includegraphics[width=0.57\textwidth]{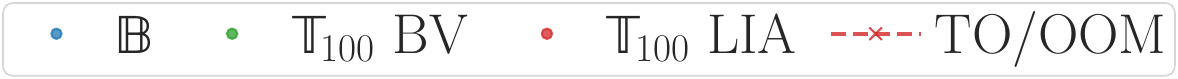}

	\vspace{0.4em}
  \begin{subfigure}[b]{0.49\textwidth}
    \centering
    \includegraphics[width=\linewidth,alt={\input{./figures/alt/comparison-angluin-time.txt}\unskip}]{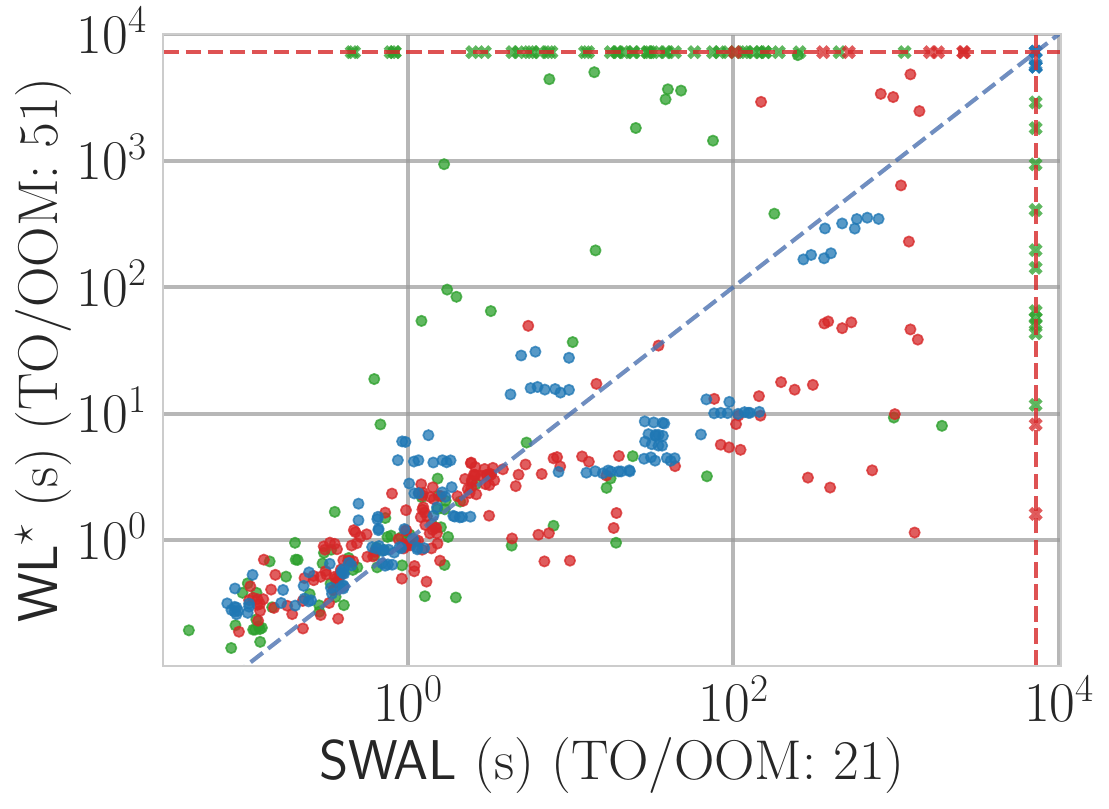}
    \subcaption{}
    \label{fig:comparison-angluin-time}
  \end{subfigure}
  \hfill
  \begin{subfigure}[b]{0.49\textwidth}
    \centering
    \includegraphics[width=\linewidth,alt={\input{./figures/alt/comparison-angluin-queries.txt}\unskip}]{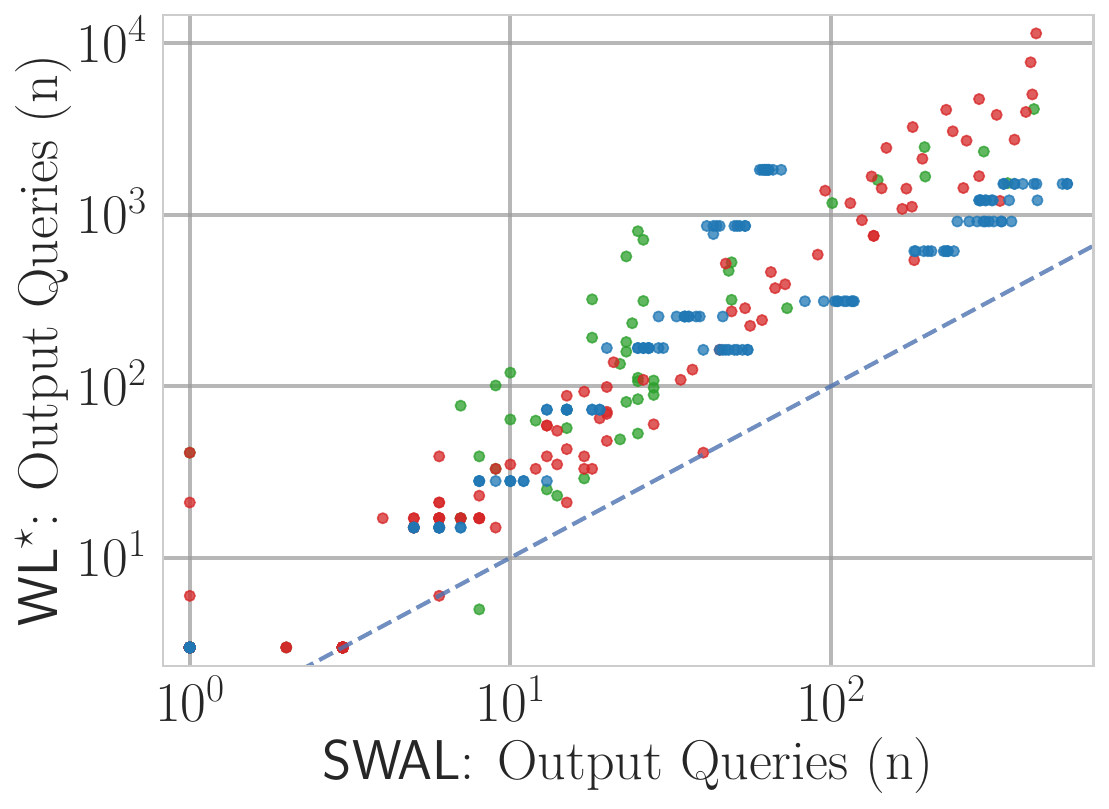}
    \subcaption{}
    \label{fig:comparison-angluin-queries}
  \end{subfigure}
  \vspace{0.8em}
  \begin{subfigure}[b]{0.49\textwidth}
    \centering
    \includegraphics[width=\linewidth,alt={\input{./figures/alt/comparison-angluin-size.txt}\unskip}]{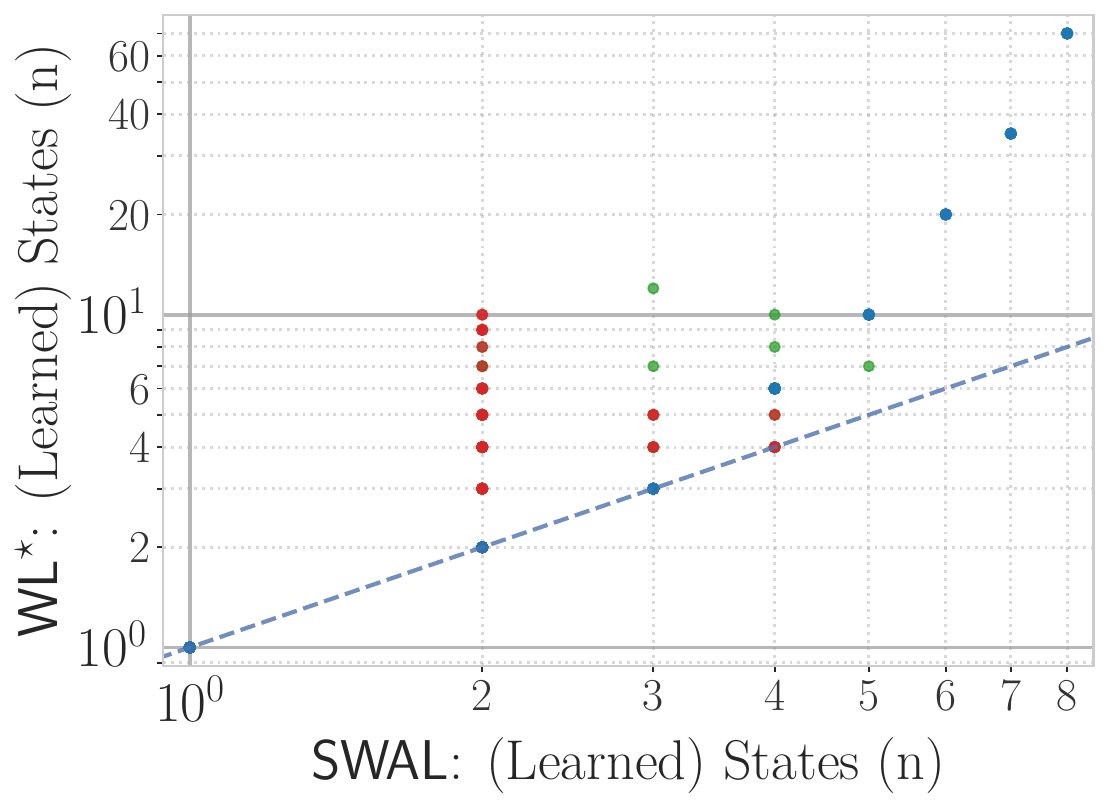}
    \subcaption{}
    \label{fig:comparison-angluin-size}
  \end{subfigure}
  \vspace{-10pt}
\caption{Comparison of \algopid vs \algonew on total run-time (seconds in log-log, top left), number of output queries (count in log-log, top right), and number of states learned (log-log, bottom). TO/OOM count provided per axis (top left).}
  \label{fig:comparison-angluin}
  \vspace{-15pt}
\end{figure}

$\algopid$~\cite{heerdt_learning_2020} is a state-of-the-art \lstar-based active WFA learning algorithm for semirings forming a principal ideal domain. This means, under the semirings considered in this paper, $\algopid$ can learn WFAs over the Boolean and the \emph{bounded} ($b=100$) Tropical semiring. In contrast to \algonew, $\algopid$ is, however, \emph{not} guaranteed to produce \emph{minimal} automata. We are thus primarily interested in (i) how much the learned automata differ in size and (ii) how many output queries the respective algorithms require for learning these automata.

Since  $\algopid$ has not been implemented, we developed our own reference implementation. To enable a fair comparison, we optimize $\algopid$ in the same manner as we optimized $\algonew$, and use the exact same back-end libraries (for SMT solving and matrix operations), semiring encodings,  and teachers (with random seeds fixed). Importantly, \algopid \emph{always produces automata of the same size, regardless of how it is implemented}, which renders our comparison reasonable.

\paragraph*{Results.}
We consider all benchmarks for the Boolean and the bounded ($b=100$) Tropical semiring. \Cref{fig:comparison-angluin-time} indicates the total run-time in seconds for both \algonew ($x$-axis) and \algopid ($y$-axis) in log-log-scale, distinguished by semiring.  \Cref{fig:comparison-angluin-queries} provides the number of output queries required by the respective algorithms (log-log scale). Whenever a dot is \emph{above} the diagonal line, \algonew needs \emph{less} output queries than \algopid (and vice versa). Finally, \Cref{fig:comparison-angluin-size} gives the state sizes of the \emph{learned} automata for each algorithm. Whenever a dot is \emph{above} the diagonal line, \algopid learned \emph{bigger} automata than \algonew.

We observe that $\algonew$ achieves state-minimality \emph{while consistently requiring fewer output queries} than $\algopid$. The number of states in the learned automata can differ drastically, with automata learned by \algopid being up to $10$ times bigger than the automata learned by \algonew. In terms of total time, more often than not (\Cref{fig:comparison-angluin-time}: 234 vs 163 of 397 terminating points), $\algonew$ terminates faster than $\algopid$, and has a significantly smaller number of TO/OOM hits. We conclude that \algonew consistently learns (sometimes significantly) smaller automata than \algopid while requiring less output queries and remaining competitive on run-time.

\section{Related Work}\label{sec:related}
Active automata learning traces its roots back to the late 1980s with Angluin's seminal \algolstar algorithm~\cite{angluin_learning_1987}. While \algolstar was tailored for learning deterministic finite automata (DFAs), it is the foundation of a long line of adaptations for learning other types of automata, including weighted automata. Bergadano et al.~\cite{bergadano_learning_1996} presented an \algolstar-based algorithm for learning (minimal) weighted automata over fields such as the rationals. Bollig et al. designed the \algonlstar algorithm~\cite{bollig_angluin-style_2009} to learn nondeterministic finite automata (NFAs), which correspond to weighted automata over the Boolean semiring, however the class of learnable automata was restricted (so-called residual automata) and not necessarily minimal. This procedure was generalized by van Heerdt et al. with the \algopid algorithm~\cite{heerdt_learning_2020}, which was proved to terminate for all principal ideal domains (yet no minimality guarantee). Weighted automata learning has generated interest in recent years, with new proofs of termination for the \algopid algorithm for integer-valued automata over the rationals~\cite{buna-marginean_learning_2024}, automata over number fields with almost minimality~\cite{aristote_learning_2025}, and the subclass of deterministic weighted automata over semifields~\cite{pasti_l_2024}. Daviaud et al.~\cite{daviaud_feasability_2025} also provided a more formal language overview of the general feasibility of \algolstar-style learning of weighted automata, however learnability of semirings such as the naturals or tropical semirings remains unknown. In the aforementioned works, only the algorithms from~\cite{bergadano_learning_1996,buna-marginean_learning_2024,aristote_learning_2025} have minimality arguments. The focus is often on using the algebraic properties of the semiring to characterize termination (ignoring minimality) of \algopid by restricting the semiring or the class of weighted languages the WFA can recognize. Our algorithm instead always guarantees minimality, though this comes at the cost of not having an easy way to explore the same algebraic properties for provable guarantees on termination.

Though we are the first to use SMT for active learning of weighted automata, there is previous work on using satisfiability solvers for inference of deterministic finite automata from finite positive and negative examples (in a passive learning setting)~\cite{oliveira,feldman,verwer} and in active learning in the context of an incomplete teacher~\cite{moeller}. Oliveira and Silva~\cite{oliveira}, extending the approach in~\cite{feldman}, use constraint solvers to find a mapping from the states of a prefix-tree automaton (PTA, essentially a partial DFA) to an automaton of a fixed size. If the search fails, the target size is
increased until an automaton is found. The work in~\cite{verwer} takes inspiration from graph-coloring to encode a SAT problem to directly infer DFAs from examples. In the active setting, Tappler and collaborators use SMT encodings to learn timed automata~\cite{tappler}, and Moeller and others~\cite{moeller} formulate a SAT encoding to infer missing entries in an observation table for \lstar -- the motivation being that the teacher might be imperfect or incomplete and not provide answers to all membership queries. All these works are restricted to deterministic automata over the Booleans, whereas our work provides an algorithm in a richer setting of weighted automata over semirings requiring more general solvers.

\section{Conclusion and Future Work}
\label{sec:conclusion}
We have presented a novel SMT-based active automata learning algorithm for WFAs as a practical alternative to Hankel/\lstar-style methods. The evaluation of our implementation shows promise: It learns numerous minimal WFAs over both finite and infinite semirings, many of which are out of scope for existing \lstar-style algorithms, and is competitive with a state-of-the-art algorithm while producing significantly smaller automata and requiring less interaction with the teacher.

As future work, we plan to investigate a more incremental version of our algorithm, where new hypothesis automata are more strongly related to previous ones. Similarly to \lstar-based characterizations of termination, this might yield (i) performance increases and (ii) generalizations of \Cref{thm:flearn_terminate} for termination guarantees for a broader class of semirings. Moreover, we plan to extend our constraint-based approach for learning a broader class of automata including, e.g., probabilistic automata. The existence of active learning for probabilistic languages is a longstanding open question that is closely related to convex optimization problems and it would be interesting to investigate if an SMT approach would offer interesting trade-offs when compared to other approaches. Finally, we plan to apply our technique to the quantitative verification of networks by learning models expressed in Weighted NetKAT \cite{Acevedo2026WeightedNetKAT}.

\begin{credits}
\subsubsection{\ackname}
We thank the CAV reviewers for their thoughtful feedback that helped us
improve the paper. This material is based on work supported by the Defense
Advanced Research Projects Agency (DARPA) under Contract No. HR001125CE018
(approved for public release; distribution unlimited.) and ERC grant Autoprobe (no. 101002697).
Kevin Batz conducted part of this work at University College London.
\subsubsection{\discintname}
The authors have no competing interests to declare that are relevant to the content of this article.
\subsubsection{Data-Availability Statement}
The full implementation of $\algonew$, as well as resulting experimental data, and analysis scripts used in this paper are available in~\cite{zenodo}.
\end{credits}

\bibliographystyle{splncs04}
\bibliography{refs}

@incollection{heerdt_learning_2020,
	location = {Cham},
	title = {Learning Weighted Automata over Principal Ideal Domains},
	volume = {12077},
	isbn = {978-3-030-45230-8 978-3-030-45231-5},
	url = {http://link.springer.com/10.1007/978-3-030-45231-5_31},
	doi = {10.1007/978-3-030-45231-5_31},
	pages = {602--621},
	booktitle = {Foundations of Software Science and Computation Structures},
	publisher = {Springer International Publishing},
	author = {van Heerdt, Gerco and Kupke, Clemens and Rot, Jurriaan and Silva, Alexandra},
	editor = {Goubault-Larrecq, Jean and König, Barbara},
	urldate = {2025-05-28},
	date = {2020},
	year = {2020},
	langid = {english},
	note = {Series Title: Lecture Notes in Computer Science},
}

@article{angluin_learning_1987,
	title = {Learning regular sets from queries and counterexamples},
	volume = {75},
	issn = {08905401},
	url = {https://linkinghub.elsevier.com/retrieve/pii/0890540187900526},
	doi = {10.1016/0890-5401(87)90052-6},
	pages = {87--106},
	number = {2},
	journal = {Information and Computation},
	journaltitle = {Information and Computation},
	shortjournal = {Information and Computation},
	author = {Angluin, Dana},
	urldate = {2020-12-23},
	date = {1987-11},
	year = {1987},
	langid = {english},
}

@inproceedings{moeller,
  author       = {Mark Moeller and
                  Thomas Wiener and
                  Alaia Solko{-}Breslin and
                  Caleb Koch and
                  Nate Foster and
                  Alexandra Silva},
  editor       = {Karim Ali and
                  Guido Salvaneschi},
  title        = {Automata Learning with an Incomplete Teacher},
  booktitle    = {37th European Conference on Object-Oriented Programming, {ECOOP} 2023,
                  Seattle, Washington, United States, July 17-21, 2023},
  series       = {LIPIcs},
  volume       = {263},
  pages        = {21:1--21:30},
  publisher    = {Schloss Dagstuhl - Leibniz-Zentrum f{\"{u}}r Informatik},
  year         = {2023},
  url          = {https://doi.org/10.4230/LIPIcs.ECOOP.2023.21},
  doi          = {10.4230/LIPICS.ECOOP.2023.21},
  bibsource    = {dblp computer science bibliography, https://dblp.org}
}

@inproceedings{verwer,
  author       = {Marijn Heule and
                  Sicco Verwer},
  editor       = {Jos{\'{e}} M. Sempere and
                  Pedro Garc{\'{\i}}a},
  title        = {Exact {DFA} Identification Using {SAT} Solvers},
  booktitle    = {Grammatical Inference: Theoretical Results and Applications, 10th
                  International Colloquium, {ICGI} 2010, Valencia, Spain, September
                  13-16, 2010. Proceedings},
  series       = {Lecture Notes in Computer Science},
  volume       = {6339},
  pages        = {66--79},
  publisher    = {Springer},
  year         = {2010},
  url          = {https://doi.org/10.1007/978-3-642-15488-1\_7},
  doi          = {10.1007/978-3-642-15488-1\_7},
  bibsource    = {dblp computer science bibliography, https://dblp.org}
}

@ARTICLE{feldman,
  author={Biermann, A. W. and Feldman, J. A.},
  journal={IEEE Transactions on Computers},
  title={On the Synthesis of Finite-State Machines from Samples of Their Behavior},
  year={1972},
  volume={C-21},
  number={6},
  pages={592-597},
  doi={10.1109/TC.1972.5009015}}

@article{oliveira,
  author       = {Arlindo L. Oliveira and
                  Jo{\~{a}}o P. Marques Silva},
  title        = {Efficient Algorithms for the Inference of Minimum Size DFAs},
  journal      = {Mach. Learn.},
  volume       = {44},
  number       = {1/2},
  pages        = {93--119},
  year         = {2001},
  url          = {https://doi.org/10.1023/A:1010828029885},
  doi          = {10.1023/A:1010828029885},
  bibsource    = {dblp computer science bibliography, https://dblp.org}
}

@article{vaandrager_model_2017,
	title = {Model Learning},
	volume = {60},
	doi = {10.1145/2967606},
	pages = {86--95},
	journal = {Communications of the {ACM}},
	journaltitle = {Communications of the {ACM}},
	shortjournal = {Commun. {ACM}},
	author = {Vaandrager, Frits W.},
	date = {2017},
	year = {2017},
}

@inproceedings{fiterau-brostean_combining_2016,
	title = {Combining Model Learning and Model Checking to Analyze {TCP} Implementations},
	volume = {9780},
	doi = {10.1007/978-3-319-41540-6_25},
	series = {{LNCS}},
	pages = {454--471},
	booktitle = {{CAV}},
	publisher = {Springer},
	author = {Fiterau-Brostean, Paul and Janssen, Ramon and Vaandrager, Frits W.},
	date = {2016},
	year = {2016},
}

@inproceedings{ferreira_prognosis_2021,
	title = {Prognosis: closed-box analysis of network protocol implementations},
	doi = {10.1145/3452296.3472938},
	pages = {762--774},
	booktitle = {{SIGCOMM}},
	publisher = {{ACM}},
	author = {Ferreira, Tiago and Brewton, Harrison and D'Antoni, Loris and Silva, Alexandra},
	date = {2021},
}

@incollection{finkbeiner_learning_2015,
	location = {Cham},
	title = {Learning the Language of Error},
	volume = {9364},
	isbn = {978-3-319-24952-0 978-3-319-24953-7},
	url = {http://link.springer.com/10.1007/978-3-319-24953-7_9},
	doi = {10.1007/978-3-319-24953-7_9},
	pages = {114--130},
	booktitle = {Automated Technology for Verification and Analysis},
	publisher = {Springer International Publishing},
	author = {Chapman, Martin and Chockler, Hana and Kesseli, Pascal and Kroening, Daniel and Strichman, Ofer and Tautschnig, Michael},
	editor = {Finkbeiner, Bernd and Pu, Geguang and Zhang, Lijun},
	urldate = {2026-01-29},
	date = {2015},
	year = {2015},
	note = {Series Title: Lecture Notes in Computer Science},
}

@inproceedings{aarts_inference_2010,
	location = {Berlin, Heidelberg},
	title = {Inference and Abstraction of the Biometric Passport},
	isbn = {978-3-642-16558-0},
	doi = {10.1007/978-3-642-16558-0_54},
	series = {Lecture Notes in Computer Science},
	pages = {673--686},
	booktitle = {Leveraging Applications of Formal Methods, Verification, and Validation},
	publisher = {Springer},
	author = {Aarts, Fides and Schmaltz, Julien and Vaandrager, Frits},
	editor = {Margaria, Tiziana and Steffen, Bernhard},
	date = {2010},
	year = {2010},
	langid = {english},
}

@inproceedings{marksteiner_automated_2024,
	location = {New York, {NY}, {USA}},
	title = {Automated Passport Control: Mining and Checking Models of Machine Readable Travel Documents},
	isbn = {979-8-4007-1718-5},
	url = {https://dl.acm.org/doi/10.1145/3664476.3670454},
	doi = {10.1145/3664476.3670454},
	series = {{ARES} '24},
	shorttitle = {Automated Passport Control},
	pages = {1--8},
	booktitle = {Proceedings of the 19th International Conference on Availability, Reliability and Security},
	publisher = {Association for Computing Machinery},
	author = {Marksteiner, Stefan and Sirjani, Marjan and Sjödin, Mikael},
	urldate = {2024-07-31},
	date = {2024-07-30},
	year = {2024},
}

@inproceedings{aarts_formal_2013,
	title = {Formal Models of Bank Cards for Free},
	doi = {10.1109/ICSTW.2013.60},
	eventtitle = {2013 {IEEE} Sixth International Conference on Software Testing, Verification and Validation Workshops},
	pages = {461--468},
	booktitle = {2013 {IEEE} Sixth International Conference on Software Testing, Verification and Validation Workshops},
	author = {Aarts, F. and Ruiter, J. De and Poll, E.},
	date = {2013-03},
	year = {2013},
}

@incollection{wu_black_1999,
	location = {Boston, {MA}},
	title = {Black Box Checking},
	volume = {28},
	isbn = {978-1-4757-5270-0 978-0-387-35578-8},
	url = {http://link.springer.com/10.1007/978-0-387-35578-8_13},
	doi = {10.1007/978-0-387-35578-8_13},
	pages = {225--240},
	booktitle = {Formal Methods for Protocol Engineering and Distributed Systems},
	publisher = {Springer {US}},
	author = {Peled, Doron and Vardi, Moshe Y. and Yannakakis, Mihalis},
	editor = {Wu, Jianping and Chanson, Samuel T. and Gao, Qiang},
	urldate = {2026-01-29},
	date = {1999},
	year = {1999},
	langid = {english},
	note = {Series Title: {IFIP} Advances in Information and Communication Technology},
}

@thesis{heerdt_efficient_2014,
	type = {Bachelor's Thesis},
	title = {Efficient Inference of Mealy Machines},
	institution = {Radboud University Nijmegen},
	author = {van Heerdt, Gerco},
	date = {2014},
	year = {2014},
}

@incollection{finkbeiner_scalable_2024,
	location = {Cham},
	title = {Scalable Tree-based Register Automata Learning},
	volume = {14571},
	isbn = {978-3-031-57248-7 978-3-031-57249-4},
	url = {https://link.springer.com/10.1007/978-3-031-57249-4_5},
	doi = {10.1007/978-3-031-57249-4_5},
	pages = {87--108},
	booktitle = {Tools and Algorithms for the Construction and Analysis of Systems},
	publisher = {Springer Nature Switzerland},
	author = {Dierl, Simon and Fiterau-Brostean, Paul and Howar, Falk and Jonsson, Bengt and Sagonas, Konstantinos and Tåquist, Fredrik},
	editor = {Finkbeiner, Bernd and Kovács, Laura},
	urldate = {2025-04-03},
	date = {2024},
	langid = {english},
	note = {Series Title: Lecture Notes in Computer Science},
}

@article{bergadano_learning_1996,
	title = {Learning Behaviors of Automata from Multiplicity and Equivalence Queries},
	volume = {25},
	issn = {0097-5397, 1095-7111},
	url = {http://epubs.siam.org/doi/10.1137/S009753979326091X},
	doi = {10.1137/S009753979326091X},
	pages = {1268--1280},
	number = {6},
	journal = {{SIAM} Journal on Computing},
	journaltitle = {{SIAM} Journal on Computing},
	shortjournal = {{SIAM} J. Comput.},
	author = {Bergadano, Francesco and Varricchio, Stefano},
	urldate = {2026-01-29},
	date = {1996-12},
	year = {1996},
	langid = {english},
}

@inproceedings{aristote_learning_2025,
	location = {Singapore, Singapore},
	title = {Learning Weighted Automata over Number Rings, Concretely and Categorically},
	isbn = {979-8-3315-7900-5},
	url = {https://ieeexplore.ieee.org/document/11186332/},
	doi = {10.1109/LICS65433.2025.00038},
	eventtitle = {2025 40th Annual {ACM}/{IEEE} Symposium on Logic in Computer Science ({LICS})},
	pages = {417--430},
	booktitle = {2025 40th Annual {ACM}/{IEEE} Symposium on Logic in Computer Science ({LICS})},
	publisher = {{IEEE}},
	author = {Aristote, Quentin and Van Gool, Sam and Petrişan, Daniela and Shirmohammadi, Mahsa},
	urldate = {2026-01-29},
	date = {2025-06-23},
	year = {2025},
}

@collection{droste_handbook_2009,
	location = {Berlin, Heidelberg},
	title = {Handbook of Weighted Automata},
	isbn = {978-3-642-01491-8 978-3-642-01492-5},
	url = {https://link.springer.com/10.1007/978-3-642-01492-5},
	doi = {10.1007/978-3-642-01492-5},
	series = {Monographs in Theoretical Computer Science. An {EATCS} Series},
	publisher = {Springer Berlin Heidelberg},
	editor = {Droste, Manfred and Kuich, Werner and Vogler, Heiko},
	date = {2009},
	year = {2009},
	langid = {english},
}

@article{bonsangue_sound_2013,
	title = {Sound and Complete Axiomatizations of Coalgebraic Language Equivalence},
	volume = {14},
	issn = {1529-3785, 1557-945X},
	url = {https://dl.acm.org/doi/10.1145/2422085.2422092},
	doi = {10.1145/2422085.2422092},
	pages = {1--52},
	number = {1},
	journal = {{ACM} Transactions on Computational Logic},
	journaltitle = {{ACM} Transactions on Computational Logic},
	shortjournal = {{ACM} Trans. Comput. Logic},
	author = {Bonsangue, Marcello M. and Milius, Stefan and Silva, Alexandra},
	urldate = {2026-01-29},
	date = {2013-02},
	year = {2013},
	langid = {english},
}

@inproceedings{de_moura_z3_2008,
	location = {Berlin, Heidelberg},
	title = {Z3: An Efficient {SMT} Solver},
	isbn = {978-3-540-78800-3},
	doi = {10.1007/978-3-540-78800-3_24},
	series = {Lecture Notes in Computer Science},
	shorttitle = {Z3},
	pages = {337--340},
	booktitle = {Tools and Algorithms for the Construction and Analysis of Systems},
	publisher = {Springer},
	author = {de Moura, Leonardo and Bjørner, Nikolaj},
	editor = {Ramakrishnan, C. R. and Rehof, Jakob},
	date = {2008},
	langid = {english},
}

@article{almagor_whats_2022,
	title = {What's decidable about weighted automata?},
	volume = {282},
	issn = {08905401},
	url = {https://linkinghub.elsevier.com/retrieve/pii/S0890540120301395},
	doi = {10.1016/j.ic.2020.104651},
	pages = {104651},
	journal = {Information and Computation},
	journaltitle = {Information and Computation},
	shortjournal = {Information and Computation},
	author = {Almagor, Shaull and Boker, Udi and Kupferman, Orna},
	urldate = {2025-02-03},
	date = {2022-01},
	langid = {english},
}

@incollection{hutchison_experimental_2005,
	location = {Berlin, Heidelberg},
	title = {Experimental Evaluation of Classical Automata Constructions},
	volume = {3835},
	isbn = {978-3-540-30553-8 978-3-540-31650-3},
	url = {http://link.springer.com/10.1007/11591191_28},
	doi = {10.1007/11591191_28},
	pages = {396--411},
	booktitle = {Logic for Programming, Artificial Intelligence, and Reasoning},
	publisher = {Springer Berlin Heidelberg},
	author = {Tabakov, Deian and Vardi, Moshe Y.},
	editor = {Sutcliffe, Geoff and Voronkov, Andrei},
	editorb = {Hutchison, David and Kanade, Takeo and Kittler, Josef and Kleinberg, Jon M. and Mattern, Friedemann and Mitchell, John C. and Naor, Moni and Nierstrasz, Oscar and Pandu Rangan, C. and Steffen, Bernhard and Sudan, Madhu and Terzopoulos, Demetri and Tygar, Dough and Vardi, Moshe Y. and Weikum, Gerhard},
	editorbtype = {redactor},
	urldate = {2026-01-07},
	date = {2005},
	langid = {english},
	note = {Series Title: Lecture Notes in Computer Science},
}

@incollection{goos_residual_2001,
	location = {Berlin, Heidelberg},
	title = {Residual Finite State Automata},
	volume = {2010},
	isbn = {978-3-540-41695-1 978-3-540-44693-4},
	url = {http://link.springer.com/10.1007/3-540-44693-1_13},
	doi = {10.1007/3-540-44693-1_13},
	pages = {144--157},
	booktitle = {{STACS} 2001},
	publisher = {Springer Berlin Heidelberg},
	author = {Denis, François and Lemay, Aurélien and Terlutte, Alain},
	editor = {Ferreira, Afonso and Reichel, Horst},
	editorb = {Goos, Gerhard and Hartmanis, Juris and Van Leeuwen, Jan},
	editorbtype = {redactor},
	urldate = {2025-09-17},
	date = {2001},
	note = {Series Title: Lecture Notes in Computer Science},
}

@inproceedings{bollig_angluin-style_2009,
	location = {San Francisco, {CA}, {USA}},
	title = {Angluin-style learning of {NFA}},
	series = {{IJCAI}'09},
	pages = {1004--1009},
	booktitle = {Proceedings of the 21st International Joint Conference on Artificial Intelligence},
	publisher = {Morgan Kaufmann Publishers Inc.},
	author = {Bollig, Benedikt and Habermehl, Peter and Kern, Carsten and Leucker, Martin},
	date = {2009},
}

@article{buna-marginean_learning_2024,
	title = {On Learning Polynomial Recursive Programs},
	volume = {8},
	issn = {2475-1421},
	url = {https://dl.acm.org/doi/10.1145/3632876},
	doi = {10.1145/3632876},
	pages = {1001--1027},
	issue = {{POPL}},
	journal = {Proceedings of the {ACM} on Programming Languages},
	journaltitle = {Proceedings of the {ACM} on Programming Languages},
	shortjournal = {Proc. {ACM} Program. Lang.},
	author = {Buna-Marginean, Alex and Cheval, Vincent and Shirmohammadi, Mahsa and Worrell, James},
	urldate = {2026-01-29},
	date = {2024-01-02},
	langid = {english},
}

@inproceedings{pasti_l_2024,
	location = {Miami, Florida, {USA}},
	title = {An L* Algorithm for Deterministic Weighted Regular Languages},
	url = {https://aclanthology.org/2024.emnlp-main.468},
	doi = {10.18653/v1/2024.emnlp-main.468},
	eventtitle = {Proceedings of the 2024 Conference on Empirical Methods in Natural Language Processing},
	pages = {8197--8210},
	booktitle = {Proceedings of the 2024 Conference on Empirical Methods in Natural Language Processing},
	publisher = {Association for Computational Linguistics},
	author = {Pasti, Clemente and Karagöz, Talu and Nowak, Franz and Svete, Anej and Boumasmoud, Reda and Cotterell, Ryan},
	urldate = {2026-01-29},
	date = {2024},
	langid = {english},
}

@article{daviaud_feasability_2025,
	title = {Feasability of Learning Weighted Automata on a Semiring},
	volume = {Volume 21, Issue 3},
	issn = {1860-5974},
	url = {https://doi.org/10.46298/lmcs-21(3:15)2025},
	doi = {10.46298/lmcs-21(3:15)2025},
	pages = {13612},
	journal = {Logical Methods in Computer Science},
	journaltitle = {Logical Methods in Computer Science},
	author = {Daviaud, Laure and Johnson, Marianne},
	urldate = {2025-09-17},
	date = {2025-08-07},
	eprinttype = {arxiv},
	eprint = {2309.07806 [cs]},
}

@incollection{tappler,
  title = {Timed {{Automata Learning}} via {{SMT Solving}}},
  booktitle = {{{NASA Formal Methods}}},
  author = {Tappler, Martin and Aichernig, Bernhard K. and Lorber, Florian},
  editor = {Deshmukh, Jyotirmoy V. and Havelund, Klaus and Perez, Ivan},
  date = {2022},
  volume = {13260},
  pages = {489--507},
  publisher = {Springer International Publishing},
  location = {Cham},
  doi = {10.1007/978-3-031-06773-0_26},
  isbn = {978-3-031-06772-3 978-3-031-06773-0},
  langid = {english}
}

@incollection{neider_BenchmarksAutomataLearning_2019,
  title = {Benchmarks for {{Automata Learning}} and {{Conformance Testing}}},
  booktitle = {Models, {{Mindsets}}, {{Meta}}: {{The What}}, the {{How}}, and the {{Why Not}}?},
  author = {Neider, Daniel and Smetsers, Rick and Vaandrager, Frits and Kuppens, Harco},
  editor = {Margaria, Tiziana and Graf, Susanne and Larsen, Kim G.},
  date = {2019},
  volume = {11200},
  pages = {390--416},
  publisher = {Springer International Publishing},
  location = {Cham},
  doi = {10.1007/978-3-030-22348-9_23},
  isbn = {978-3-030-22347-2 978-3-030-22348-9},
  langid = {english}
}

@misc{zenodo,
  title = {{{SMT-Based Active Learning}} of {{Weighted Automata}} (Artifact)},
  author = {Ferreira, Tiago and Batz, Kevin and Silva, Alexandra},
  year = 2026,
  month = apr,
  doi = {10.5281/ZENODO.19701329},
  urldate = {2026-04-23},
  copyright = {MIT License},
  howpublished = {Zenodo},
  langid = {english}
}

@incollection{gao_DCompleteDecisionProcedures_2012,
  title = {$\delta$-{{Complete Decision Procedures}} for {{Satisfiability}} over the {{Reals}}},
  booktitle = {Automated {{Reasoning}}},
  author = {Gao, Sicun and Avigad, Jeremy and Clarke, Edmund M.},
  editor = {Gramlich, Bernhard and Miller, Dale and Sattler, Uli},
  editora = {Hutchison, David and Kanade, Takeo and Kittler, Josef and Kleinberg, Jon M. and Mattern, Friedemann and Mitchell, John C. and Naor, Moni and Nierstrasz, Oscar and Pandu Rangan, C. and Steffen, Bernhard and Sudan, Madhu and Terzopoulos, Demetri and Tygar, Doug and Vardi, Moshe Y. and Weikum, Gerhard},
  editoratype = {redactor},
  date = {2012},
  volume = {7364},
  pages = {286--300},
  publisher = {Springer Berlin Heidelberg},
  location = {Berlin, Heidelberg},
  doi = {10.1007/978-3-642-31365-3_23},
  isbn = {978-3-642-31364-6 978-3-642-31365-3}
}

@article{cimatti_IncrementalLinearizationSatisfiability_2018,
  title = {Incremental {{Linearization}} for {{Satisfiability}} and {{Verification Modulo Nonlinear Arithmetic}} and {{Transcendental Functions}}},
  author = {Cimatti, Alessandro and Griggio, Alberto and Irfan, Ahmed and Roveri, Marco and Sebastiani, Roberto},
  year = 2018,
  month = jul,
  journal = {ACM Transactions on Computational Logic},
  volume = {19},
  number = {3},
  pages = {1--52},
  issn = {1529-3785, 1557-945X},
  doi = {10.1145/3230639},
  urldate = {2026-04-24},
  langid = {english}
}

@article{fujiwara_TestSelectionBased_1991,
  title = {Test Selection Based on Finite State Models},
  author = {Fujiwara, S. and V. Bochmann, G. and Khendek, F. and Amalou, M. and Ghedamsi, A.},
  date = {1991-06},
  journaltitle = {IEEE Transactions on Software Engineering},
  shortjournal = {IIEEE Trans. Software Eng.},
  volume = {17},
  number = {6},
  pages = {591--603},
  issn = {00985589},
  doi = {10.1109/32.87284}
}

@article{Batz2022WeightedProgramming,
	author    = {Kevin Batz and Adrian Gallus and Benjamin Lucien Kaminski and Joost-Pieter Katoen and Tobias Winkler},
	title     = {Weighted Programming: A Programming Paradigm for Specifying Mathematical Models},
	journal   = {Proceedings of the ACM on Programming Languages},
	volume    = {6},
	number    = {OOPSLA1},
	articleno = {66},
	pages     = {1--30},
	year      = {2022},
	doi       = {10.1145/3527310}
}

@article{Acevedo2026WeightedNetKAT,
    author    = {Emmanuel {Su\'arez Acevedo} and Tiago Ferreira and Kevin Batz and Oliver B{\o}ving and Nate Foster and Alexandra Silva},
    title     = {Weighted {NetKAT}: A Programming Language For Quantitative Network Verification},
    journal   = {Proceedings of the {ACM} on Programming Languages},
    journaltitle = {Proceedings of the {ACM} on Programming Languages},
    shortjournal = {Proc. {ACM} Program. Lang.},
    volume    = {10},
    number    = {PLDI},
    issue     = {PLDI},
    articleno = {240},
    article   = {240},
    year      = {2026},
    month     = jun,
    doi       = {10.1145/3808318},
    langid = {english},
    eprint    = {2604.13987},
    eprinttype = {arxiv},
    eprintclass = {cs.PL},
}

\end{document}